\documentclass[runningheads]{llncs}
\usepackage[utf8]{inputenc}
\usepackage{graphicx}
\usepackage{amsmath,amsfonts,amssymb}
\usepackage{bm}
\usepackage{booktabs}
\usepackage{hyperref}
\hypersetup{%
  colorlinks=true,
}

\DeclareMathOperator*{\argmin}{arg\,min}
\DeclareMathOperator{\Forall}{\forall}

\begin{document}
\title{Uncertainty Estimation in Medical Image Denoising with Bayesian Deep Image Prior}
\titlerunning{Uncertainty Estimation with Bayesian Deep Image Prior}
%
\author{Max-Heinrich Laves \and
Malte Tölle \and
Tobias Ortmaier}

\authorrunning{MH.\ Laves et al.}

\institute{Leibniz Universität Hannover, Hanover, Germany\\
\email{\{lastname\}@imes.uni-hannover.de}}
\maketitle              
\begin{abstract}
Uncertainty quantification in inverse medical imaging tasks with deep learning has received little attention.
However, deep models trained on large data sets tend to hallucinate and create artifacts in the reconstructed output that are not anatomically present.
We use a randomly initialized convolutional network as parameterization of the reconstructed image and perform gradient descent to match the observation, which is known as deep image prior.
In this case, the reconstruction does not suffer from hallucinations as no prior training is performed.
We extend this to a Bayesian approach with Monte Carlo dropout to quantify both aleatoric and epistemic uncertainty.
The presented method is evaluated on the task of denoising different medical imaging modalities.
The experimental results show that our approach yields well-calibrated uncertainty.
That is, the predictive uncertainty correlates with the predictive error.
This allows for reliable uncertainty estimates and can tackle the problem of hallucinations and artifacts in inverse medical imaging tasks.

\keywords{Variational inference, hallucination, deep learning}
\end{abstract}
\section{Introduction}

Noise in medical imaging affects all modalities, including X-ray, magnetic resonance imaging (MRI), computed tomography (CT), ultrasound (US) or optical coherence tomography (OCT) and can obstruct important details for medical diagnosis \cite{Gondara2016,Agostinelli2013,Laves2019ECBO}.
Besides ``classical'' approaches with linear and non-linear filters, such as the Wiener filter, or wavelet-denoising \cite{Chang2000,Rabbani2009}, convolutional neural networks (CNN) have proven to yield superior performance in denoising of natural and medical images \cite{Zhang2017,Laves2019ECBO}.

The task of denoising is an inverse image problem and aims at reconstructing a clean image $ \hat{\bm{x}} $ from a noisy observation $ \tilde{\bm{x}} = \bm{c} \circ \bm{x} $.
A common assumption of the noise model $ \bm{c} $ of the image $ \tilde{\bm{x}} $ is additive white Gaussian noise with zero mean and standard deviation $ \sigma $ \cite{Salinas2007,Zhang2017}.
Given a noisy image $ \tilde{\bm{x}} $, the denoising can be expressed as optimization problem of the form
\begin{equation}
    \hat{\bm{x}} = \argmin \Big\{ \mathcal{L}(\tilde{\bm{x}}, \hat{\bm{x}}) + \lambda \mathcal{R}(\hat{\bm{x}}) \Big\} ~ .
\end{equation}
The reconstruction $ \hat{\bm{x}} $ should be close to $ \tilde{\bm{x}} $ by means of a similarity metric $ \mathcal{L} $, but with substantially less noise.
The regularizer $ \mathcal{R} $ expresses a prior on the reconstructed images, which leads to $ \hat{\bm{x}} $ having less noise than $ \tilde{\bm{x}} $.
One usually imposes a smoothness constrain by penalizing first or higher order spatial derivatives of the image \cite{Sotiras2013}.
More recently, denoising autoencoders have successfully been used to implicitly learn a regularization prior from a data set with corrupted and uncorrupted data samples \cite{Jain2009}.
Autoencoders are usually composed of an encoding and decoding part with a data bottleneck in between.
The encoder extracts important visual features from the noisy input image and the decoder reconstructs the input from the extracted features using learned image statistics.

This, however, creates the root problem of medical image denoising with deep learning that is addressed in this paper.
The reconstruction is in accordance with the expectation of the denoising autoencoder based on previously learned information.
At worst, the reconstruction can contain false image features, that look like valid features, but are not actually present in the input image.
Due to the excellent denoising performance of autoencoders, those false features can be indistinguishable from valid features to a layperson and are embedded in an otherwise visually appealing image.
This phenomenon is known as \emph{hallucination} and, while acceptable in the reconstruction of natural images \cite{Wang2014}, must be avoided at all costs in medical imaging (see Fig.\,\ref{fig:opener}).
Hallucinations can lead to false diagnoses and thus severely compromise patient safety.

\begin{figure}[t]
    \centering
    \parbox{4.2cm}{\centering
    \includegraphics[width=4cm]{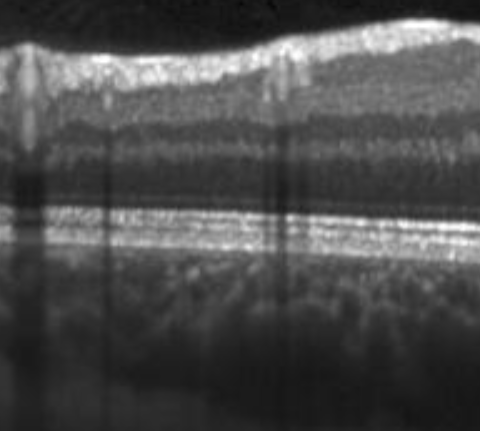} \\
    ground truth} ~
    \parbox{4.2cm}{\centering
    \includegraphics[width=4cm]{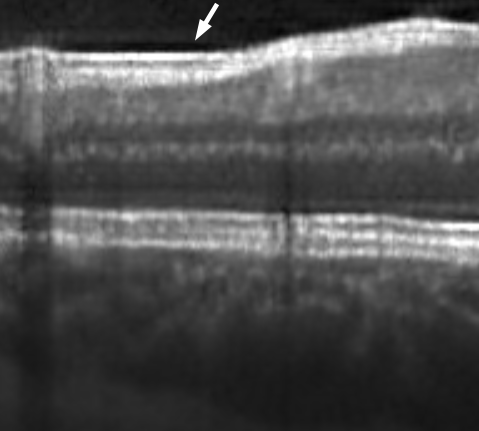} \\
    reconstruction}
    \caption{Hallucinations in reconstructed retinal OCT scan from supervisely trained CNN. (Left) Ground truth OCT scan. (Right) The white arrow denotes a hallucinated retinal layer that is anatomically incorrect. Hallucinations are the result of reconstructing an unseen noisy input using previously learned image statistics.}
    \label{fig:opener}
\end{figure}

To further increase the reliability in the denoised medical images, the reconstruction uncertainty has to be considered.
Bayesian autoencoders provide the mathematical framework to quantify a per-pixel reconstruction uncertainty \cite{Bishop2006,Kingma2013,Cheng2019}.
This allows the detection of hallucinations and other  artifacts, given that the uncertainty is well-calibrated; i.\,e.\ the uncertainty corresponds well with the reconstruction error \cite{Laves2020}.

In this work, we employ \emph{deep image prior} \cite{Lempitsky2018} to cope with hallucinations in medical image denoising and provide a Bayesian approach with Monte Carlo (MC) dropout \cite{Gal2016} that yields well-calibrated reconstruction uncertainty.
We present experimental results on denoising images from low-dose X-ray, ultrasound and OCT. 
Compared to previous work, our approach leads to better uncertainty estimates and is less prone to overfitting of the noisy image.
Our code is publicly available at \href{https://github.com/mlaves/uncertainty-deep-image-prior}{github.com/mlaves/uncertainty-deep-image-prior}.

\section{Related Work}

\paragraph{Image priors.}
Besides manually crafted priors such as 3D collaborative filtering \cite{Dabov2007}, convolutional denoising autoencoders have been used to implicitly learn an image prior from data \cite{Jain2009,Gondara2016}.
Lempitsky et al.\ have recently shown that the excellent performance of deep networks for inverse image tasks, such as denoising, is based not only on their ability to learn image priors from data, but also on the structure of a convolutional image generator itself \cite{Lempitsky2018}.
An image generator network $ \hat{\bm{x}} = \bm{f}_{\bm{\theta}}(\bm{z}) $ with randomly-initialized parameters $ \bm{\theta} $ is interpreted as parameterization of the image.
The parameters $ \bm{\theta} $ of the network are found by minimizing the pixel-wise squared error $ \Vert \tilde{\bm{x}} - \bm{f}_{\bm{\theta}}(\bm{z}) \Vert $ with stochastic gradient descent (SGD).
The input $ \bm{z} $ is sampled from a uniform distribution with additional perturbations by normally distributed noise in every iteration.
This is referred to as deep image prior (DIP).
They provided empirical evidence that the structure of a CNN alone is sufficient to capture enough image statistics to provide state-of-the-art performance in inverse imaging tasks.
During the process of SGD, low-frequency image features are reconstructed first, followed by higher frequencies, which makes human supervision necessary to retrieve the optimal denoised image.
Therefore, this approach heavily relies on early stopping in order to not overfit the noise.
However, a key advantage of deep image prior is the absence of hallucinations, since there is no prior learning.
A Bayesian approach could alleviate overfitting and additionally provide reconstruction uncertainty.

\paragraph{Bayesian deep learning.}
Bayesian neural networks allow estimation of predictive uncertainty \cite{Bishop2006} and we generally differentiate between aleatoric and epistemic uncertainty \cite{Kendall2017}.
Aleatoric uncertainty results from noise in the data (e.\,g.\ speckle noise in US or OCT).
It is derived from the conditional log-likelihood under the maximum likelihood estimation (MLE) or maximum posterior (MAP) framework and can be captured directly by a deep network (i.\,e.\ by subdividing the last layer of an image generator network).
Epistemic uncertainty is caused by uncertainty in the model parameters.
In deep learning, we usually perform MLE or MAP inference to find a single best estimate $ \hat{\bm{\theta}} $ for the network parameters.
This does not allow estimation of epistemic uncertainty and we therefore place distributions over the parameters.
In Bayesian inference, we want to consider all possible parameter configurations, weighted by their posterior.
Computing the posterior predictive distribution involves marginalization of the parameters $ \bm{\theta} $, which is intractable.
A common approximation of the posterior distribution is variational inference with Monte Carlo dropout \cite{Gal2016}.
It allows estimation of epistemic uncertainty by Monte Carlo sampling from the posterior of a network, that has been trained with dropout.

\paragraph{Bayesian deep image prior.}
Cheng et al.\ recently provided a Bayesian perspective on the deep image prior in the context of natural images, which is most related to our work \cite{Cheng2019}.
They interpret the convolutional network as spatial random process over the image coordinate space and use stochastic gradient Langevin dynamics (SGLD) as Bayesian approximation \cite{Welling2011} to sample from the posterior.
In SGLD, an MC sampler is derived from SGD by injecting Gaussian noise into the gradients after each backward pass.
The authors claim to have solved the overfitting issue with DIP and to be able to provide uncertainty estimates.
In the following, we will show that this is not the case for medical image denoising, even when using the code provided by the authors.
Further, the uncertainty estimates from SGLD do not reflect the predictive error with respect to the noise-free ground truth image.

\section{Methods}

\subsection{Aleatoric Uncertainty with Deep Image Prior}

We first revisit the concept of deep image prior for denoising and subsequently extend it to a Bayesian approach with Monte Carlo dropout to estimate both aleatoric and epistemic uncertainty.
Let $ \tilde{\bm{x}} $ be a noisy image, $ \bm{x} $ the true but generally unknown noise-free image and $ \bm{f}_{\bm{\theta}} $ an image generator network with parameter set $ \bm{\theta} $, that outputs the denoised image $ \hat{\bm{x}} $.
In deep image prior, the optimal parameter point estimate $ \hat{\bm{\theta}} $ is found by maximum likelihood estimation with gradient descent, which results in minimizing the squared error
\begin{equation}
    \hat{\bm{\theta}} = \argmin \Vert \tilde{\bm{x}} - \bm{f}_{\bm{\theta}}(\bm{z}) \Vert^{2}
\end{equation}
between the generated image $ \bm{f}_{\bm{\theta}} $ and the noisy image $ \tilde{\bm{x}} $.
The input $ \bm{z} \sim \mathcal{U}(0, 0.1) $ of the neural network has the same spatial dimensions as $ \tilde{\bm{x}} $ and is sampled from a uniform distribution.
To ensure that $ \hat{\bm{x}} $ has less noise, carefully chosen early stopping must be applied (see Sect.\,\ref{sec:results}).

To quantify aleatoric uncertainty, we assume that the image signal $ \tilde{\bm{x}} $ is sampled from a spatial random process and that each pixel $ i $ follows a Gaussian distribution $ \mathcal{N}(\tilde{x}_{i}; \hat{x}_{i}, \hat{\sigma}^{2}_{i}) $ with mean $ \hat{x}_{i} $ and variance $ \hat{\sigma}^{2}_{i} $.
We split the last layer such that the network outputs these values for each pixel
\begin{equation}
    \bm{f}_{\bm{\theta}} = \left[ \hat{\bm{x}}, \hat{\bm{\sigma}}^{2} \right] ~ .
\end{equation}
Now, MLE is performed by minimizing the full negative log-likelihood, which leads to the following optimization criterion \cite{Laves2020,Kendall2017}
\begin{equation}
    \mathcal{L}(\bm{\theta}) = \frac{1}{N}\sum_{i=1}^{N} \hat{\sigma}_{i}^{-2} \big\Vert \tilde{x}_{i} - \hat{x}_{i} \big\Vert^{2} + \log \hat{\sigma}_{i}^{2} ~ ,
    \label{eq:nll}
\end{equation}
where $ N $ is the number of pixels per image.
In this case, $ \hat{\bm{\sigma}}^{2} $ captures the pixel-wise aleatoric uncertainty and is jointly estimated with $ \hat{\bm{x}} $ by finding $ \bm{\theta} $ that minimizes Eq.\,(\ref{eq:nll}) with SGD.
For numerical stability, Eq.\,(\ref{eq:nll}) is implemented such that the network directly outputs $ -\log \hat{\bm{\sigma}}^{2} $.

\subsection{Epistemic Uncertainty with Bayesian Deep Image Prior}

Next, we move towards a Bayesian view to additionally quantify the epistemic uncertainty.
The image generator $ \bm{f}_{\bm{\theta}} $ is extended into a Bayesian neural network under the variational inference framework with MC dropout \cite{Gal2016}.
A prior distribution $ p(\bm{\theta}) \sim \mathcal{N}(\bm{0}, \lambda^{-1} \bm{I}) $ is placed over the parameters and the network $ \bm{f}_{\tilde{\bm{\theta}}} $ is trained with dropout by minimizing Eq.\,(\ref{eq:nll}) with added weight decay.
For inference, $ T $ stochastic forward passes with applied dropout are performed to sample from the approximate Bayesian posterior $ \tilde{\bm{\theta}} \sim q(\bm{\theta}) $.
This allows us to approximate the posterior predictive distribution
\begin{equation}
    p(\hat{\bm{x}} \vert \tilde{\bm{x}}) = \int p(\hat{\bm{x}} \vert \bm{\theta}, \tilde{\bm{x}}) p(\bm{\theta} \vert \tilde{\bm{x}}) \, \mathrm{d}\bm{\theta} ~ ,
\end{equation}
which is wider than the distribution from MLE or MAP, as it accounts for uncertainty in $ \bm{\theta} $.
We use Monte Carlo integration to estimate the predictive mean
\begin{equation}
    \hat{\bm{x}} = \frac{1}{T} \sum_{t=1}^{T} \hat{\bm{x}}_{t}
\end{equation}
and predictive variance \cite{Laves2020,Kendall2017}
\begin{equation}
    \hat{\bm{\sigma}}^{2} =  \underbrace{ \frac{1}{T} \sum_{t=1}^{T} \left( \hat{\bm{x}}_{t} - \frac{1}{T} \sum_{t=1}^{T} \hat{\bm{x}}_{t} \right)^{2}}_{\mathrm{epistemic}} + \underbrace{ \frac{1}{T} \sum_{t=1}^{T} \hat{\bm{\sigma}}^{2}_{t} }_{\mathrm{aleatoric}}
    \label{eq:pred_variance}
\end{equation}
with $ \bm{f}_{\tilde{\bm{\theta}}_t} = [ \bm{\hat{x}}_{t}, \bm{\hat{\sigma}}^{2}_{t} ] $.
In this work, we use $ T=25 $ MC samples with dropout probability of $ p = 0.3 $.
The resulting $ \hat{\bm{x}} $ is used as estimation of the noise-free image and $ \hat{\bm{\sigma}}^{2} $ is used as uncertainty map.
We use the mean over the pixel coordinates as scalar uncertainty value $ U $.

\subsection{Calibration of Uncertainty}

Following recent literature, we define predictive uncertainty to be well-calibrated if it correlates linearly with the predictive error \cite{Guo2017,Levi2019,Laves2020}.
More formally, miscalibration is quantified with
\begin{equation}
    \mathbb{E}_{\hat{\sigma}^{2}} \left[ \big\vert \big( \Vert \tilde{\bm{x}} - \hat{\bm{x}} \Vert^{2} \, \big\vert \, \hat{\sigma}^{2} = \sigma^{2} \big) - \sigma^{2} \big\vert \right] \quad \Forall \left\{ \sigma^{2} \in \mathbb{R} \, \vert \, \sigma^{2} \geq 0 \right\} ~ .
    \label{eq:miscalibration}
\end{equation}
That is, if all pixels in a batch were estimated with uncertainty of $ 0.2 $, we expect the predictive error (MSE) to also equal $ 0.2 $.
To approximate Eq.\,(\ref{eq:miscalibration}) on an image with finite pixels, we use the uncertainty calibration error (UCE) metric presented in \cite{Laves2020}, which involves binning the uncertainty values and computing a weighted average of absolute differences between MSE and uncertainty per bin.

\section{Experiments}

\begin{figure}[t]
    \centering
    \parbox[t]{1.95cm}{\centering
        \includegraphics[width=1.9cm]{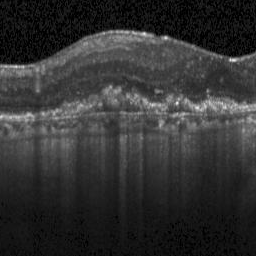} \\
        $ \bm{x}_{\mathrm{OCT}} $
    }
    \parbox[t]{1.95cm}{\centering
        \includegraphics[width=1.9cm]{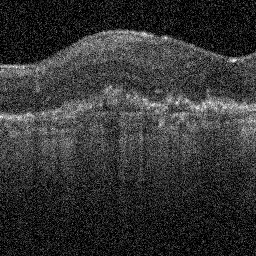} \\
        $ \tilde{\bm{x}}_{\mathrm{OCT}} $
    } 
    \parbox[t]{1.95cm}{\centering
        \includegraphics[width=1.9cm]{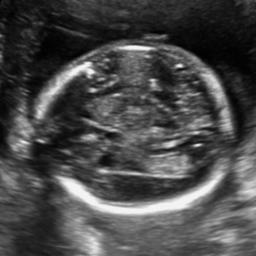} \\
        $ \bm{x}_{\mathrm{US}} $
    }
    \parbox[t]{1.95cm}{\centering
        \includegraphics[width=1.9cm]{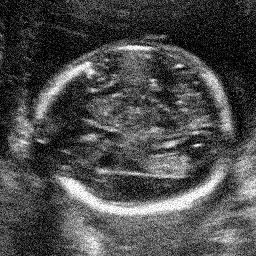} \\
        $ \tilde{\bm{x}}_{\mathrm{US}} $
    }
    \parbox[t]{1.95cm}{\centering
        \includegraphics[width=1.9cm]{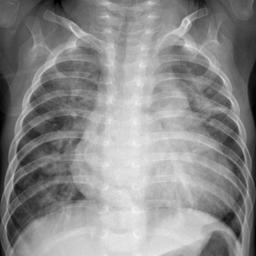} \\
        $ \bm{x}_{\mathrm{xray}} $
    }
    \parbox[t]{1.95cm}{\centering
        \includegraphics[width=1.9cm]{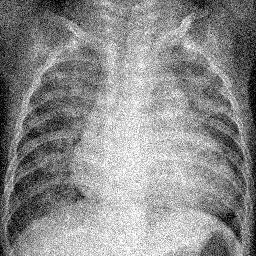} \\
        $ \tilde{\bm{x}}_{\mathrm{xray}} $
    }
    \caption{Images used to evaluate the denoising performance. The task is to reconstruct a noise-free image from $ \tilde{\bm{x}} $ without having access to $ \bm{x} $. OCT and US images are characterized by speckle noise which can be simulated by additive Gaussian noise. Low-dose X-ray shows uneven photon density that can be simulated by Poisson noise.}
    \label{fig:denoising}
\end{figure}

We refer to the presented Bayesian approach to deep image prior with Monte Carlo dropout as MCDIP and evaluate its denoising performance and the calibration of uncertainty on three different medical imaging modalities (see Fig.\,\ref{fig:denoising}).
The first test image $ \bm{x}_{\mathrm{OCT}} $ shows an OCT scan of a retina affected by choroidal neovascularization.
Next, $ \bm{x}_{\mathrm{US}} $ shows an ultrasound of a fetal head for gestational age estimation.
The third test image $ \bm{x}_{\mathrm{xray}} $ shows a chest x-ray for pneumonia assessment.
All test images are arbitrarily sampled from public data sets \cite{Kermany2018,Heuvel2018} and have a resolution of $ 512 \times 512 $ pixel.

Images from optical coherence tomography and ultrasound are prone to speckle noise due to interference phenomena \cite{Michailovich2006}.
Speckle noise can obscure small anatomical details and reduce image contrast.
It is worth mentioning that speckle patterns also contain information about the microstructure of the tissue.
However, this information is not perceptible to a human observer, therefore the denoising of such images is desirable.
Noise in low-dose X-ray originates from an uneven photon density and can be modeled with Poisson noise \cite{Zabic2013,Lee2018}.
In this work, we approximate the Poisson noise with Gaussian noise since $ \mathsf{Poisson}(\lambda) $ approaches a Normal distribution as $ \lambda \rightarrow \infty $ (see Appendix\,\ref{app:poisson}).
We first create a low-noise image $ \bm{x} $ by smoothing and downsampling the original image from public data sets using the \texttt{ANTIALIAS} filter from the Python Imaging Library (\texttt{PIL}) to $ 256 \times 256 $ pixel.
Downsampling involves averaging over highly correlated neighboring pixels affected by uncorrelated noise.
This decreases the observation noise by sacrificing image resolution (see Appendix\,\ref{app:downsampling}).
The downsampled image acts as ground truth to which we compute the peak signal-to-noise ratio (PSNR) and the structural similarity (SSIM) of the denoised image $ \hat{\bm{x}} $.
Further, we compute the UCE and provide calibration diagrams (MSE vs.\ uncertainty) to show the (mis-)calibration of the uncertainty estimates.

We compare the results from MCDIP to standard DIP and to DIP with SGLD from Cheng et al. \cite{Cheng2019}.
SGLD posterior inference is performed by averaging over $ T $ posterior samples $ \hat{\bm{x}} = \frac{1}{T} \sum_{t=1}^{T} \hat{\bm{x}}_{t} $ after a ``burn in'' phase.
The posterior variance is used as an estimator of the epistemic uncertainty $ \frac{1}{T} \sum_{t=1}^{T} \left( \hat{\bm{x}} - \hat{\bm{x}}_{t} \right)^{2} $.
Cheng et al. claim that their approach does not require early stopping and yields better denoising performance.
Additionally, we train the SGLD approach with the loss function from Eq.\,(\ref{eq:pred_variance}) to consider aleatoric uncertainty and denote this with SGLD+NLL.
We implement SGLD using the Adam optimizer, which works better in practice and is more related to preconditioned SGLD \cite{Li2016}.

\section{Results}
\label{sec:results}

The results are presented threefold: We show (1) possible overfitting in Fig.\,\ref{fig:psnrs} by plotting the PSNR between the reconstruction $ \hat{\bm{x}} $ and the ground truth image $ \bm{x} $; (2) denoising performance by providing the denoised images in Fig.\,\ref{fig:denoising_results} and PSNR in Tab.\,\ref{tab:psnr} after convergence (i.\,e.\ after 50k optimizer steps); and (3) goodness of uncertainty in Fig.\,\ref{fig:calib} by providing calibration diagrams and uncertainty maps.

Our experiments confirm what is already known: The non-Bayesian DIP quickly overfits the noisy image.
The narrow peaks in PSNR values during optimization show that manually performed early stopping is essential to obtain a reconstructed image with less noise (see Fig.\,\ref{fig:psnrs}).
The PSNR between $ \hat{\bm{x}} $ and the ground truth $ \bm{x} $ approaches the value of the PSNR between the noisy image $ \tilde{\bm{x}} $ and the ground truth, thus reconstructing the noise as well.
However, the SGLD approach shows almost identical overfitting behavior in our experiments.
This is in contrast to what is stated by Chen et al., even when using the original implementation of SGLD provided by the authors \cite{Cheng2019}.
SGLD+NLL additionally considers aleatoric uncertainty and converges to a higher PSNR level.
This indicates that SGLD+NLL does not overfit the noisy image completely.
MCDIP on the other hand does not show a sharp peak in Fig.\,\ref{fig:psnrs} and safely converges to its highest PSNR value.
This requires no manual early stopping to obtain a denoised image.
The reconstructed X-ray images after convergence in Fig.\,\ref{fig:denoising_results} underline this: MCDIP does not reconstruct the noise.
The PSNR values in Tab.\,\ref{tab:psnr} confirm these observations.
Although it was not the intention of this work to reach highest-possible PSNR values, MCDIP even outperforms the other methods with early-stopping applied (see Appendix \ref{app:tables}).

\begin{figure}[t]
    \centering
    \includegraphics[scale=0.48]{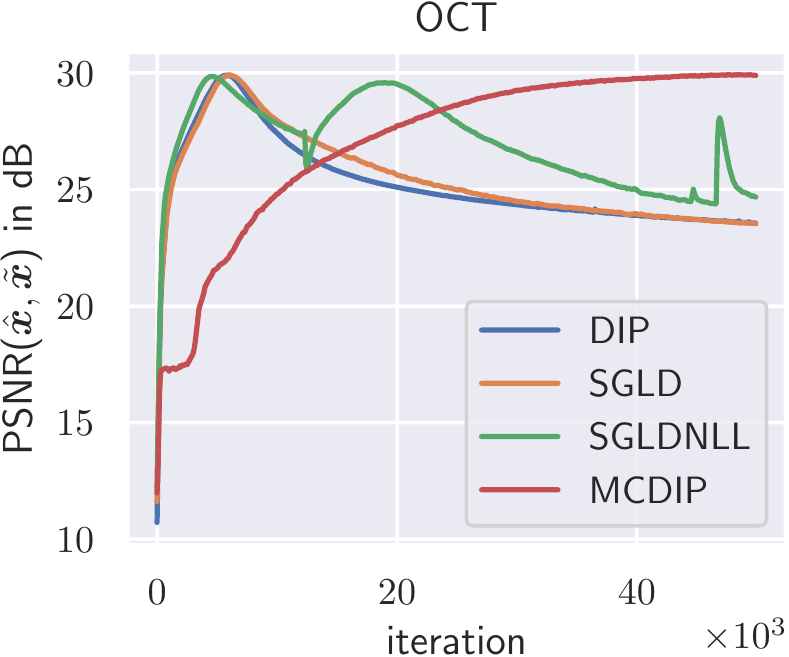}
    \includegraphics[scale=0.48]{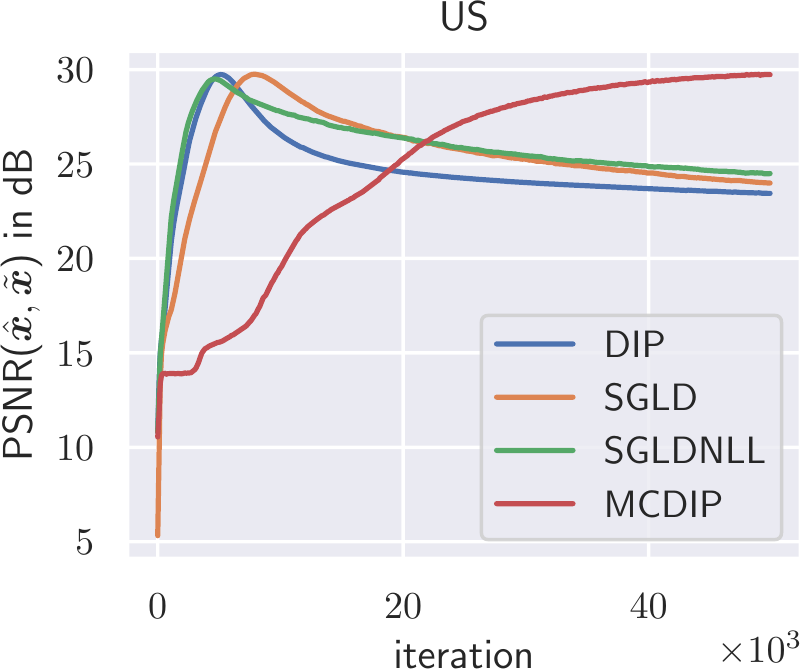}
    \includegraphics[scale=0.48]{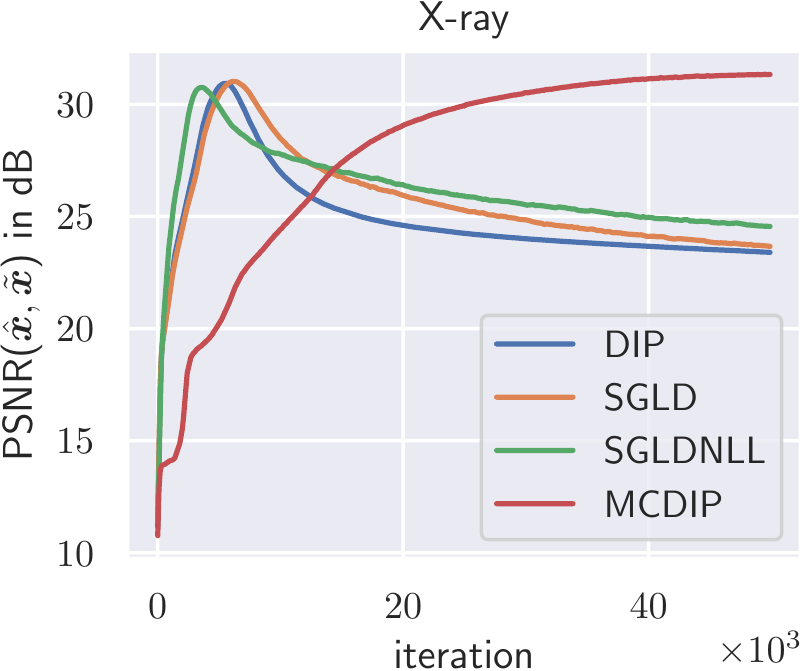}
    \caption{Peak signal-to-noise ratio between denoised image $ \hat{\bm{x}} $ and ground truth $ \bm{x} $ vs.\ number of optimizer iterations. DIP and SGLD(+NLL) quickly overfit the noisy image. MCDIP converges to its highest PSNR value and does not overfit $ \tilde{\bm{x}} $. The plots show means from 3 runs with different random initialization.}
    \label{fig:psnrs}
\end{figure}

\begin{figure}[t]
    \centering
    \parbox[t]{2.25cm}{\centering
        \includegraphics[width=2.2cm]{BACTERIA-1351146-0006.jpg} \\
        ground truth
    }
    \parbox[t]{2.25cm}{\centering
        \includegraphics[width=2.2cm]{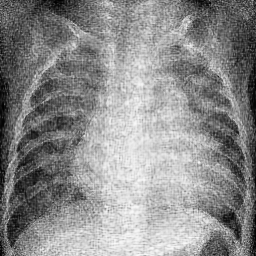} \\
        DIP
    }
    \parbox[t]{2.25cm}{\centering
        \includegraphics[width=2.2cm]{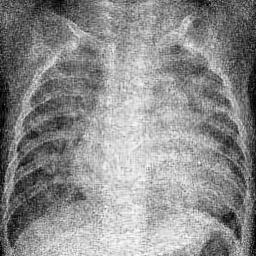} \\
        SGLD
    }
    \parbox[t]{2.25cm}{\centering
        \includegraphics[width=2.2cm]{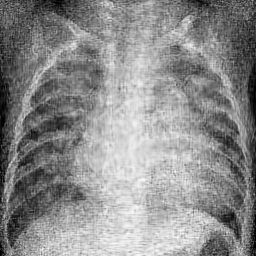} \\
        SGLD+NLL
    }
    \parbox[t]{2.25cm}{\centering
        \includegraphics[width=2.2cm]{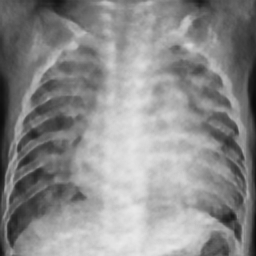} \\
        MCDIP
    }
    \caption{Denoised X-ray images after convergence. Only MCDIP does not show overfitted noise. Additional reconstructions can be found in Appendix \ref{app:figures}.}
    \label{fig:denoising_results}
\end{figure}

\begingroup
\setlength{\tabcolsep}{6pt}
\begin{table}[t]
    \centering
    \caption{PSNR values after convergence (at least 50k iterations). Note that our goal was not to reach highest possible PSNR, but to show overfitting in convergence.}
    \label{tab:psnr}
    \begin{tabular}{rcccc}
        \toprule
        PSNR   & DIP & SGLD & SGLD+NLL & MCDIP \\
        \cmidrule{2-5}
        OCT    & $ 23.64 \pm 0.19 $ & $ 23.58 \pm 0.12 $ & $ 24.82 \pm 0.12 $ & $ \mathbf{29.88} \pm 0.03 $ \\
        US     & $ 23.55 \pm 0.11 $ & $ 23.81 \pm 0.15 $ & $ 24.55 \pm 0.08 $ & $ \mathbf{29.67} \pm 0.07 $ \\
        X-ray  & $ 23.28 \pm 0.08 $ & $ 23.50 \pm 0.12 $ & $ 24.60 \pm 0.04 $ & $ \mathbf{31.19} \pm 0.10 $ \\
        \bottomrule
    \end{tabular} ~ 
\end{table}
\endgroup

\begin{figure}
    \centering
    \includegraphics[height=2.6cm]{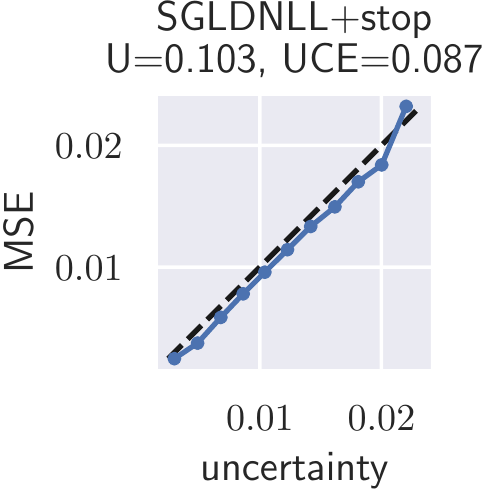}
    \includegraphics[height=2.6cm]{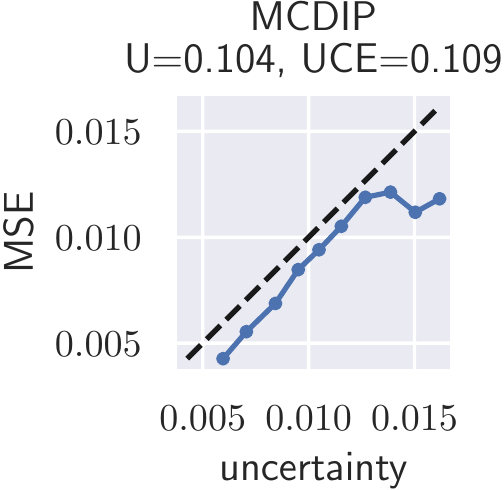}
    \includegraphics[height=2.6cm]{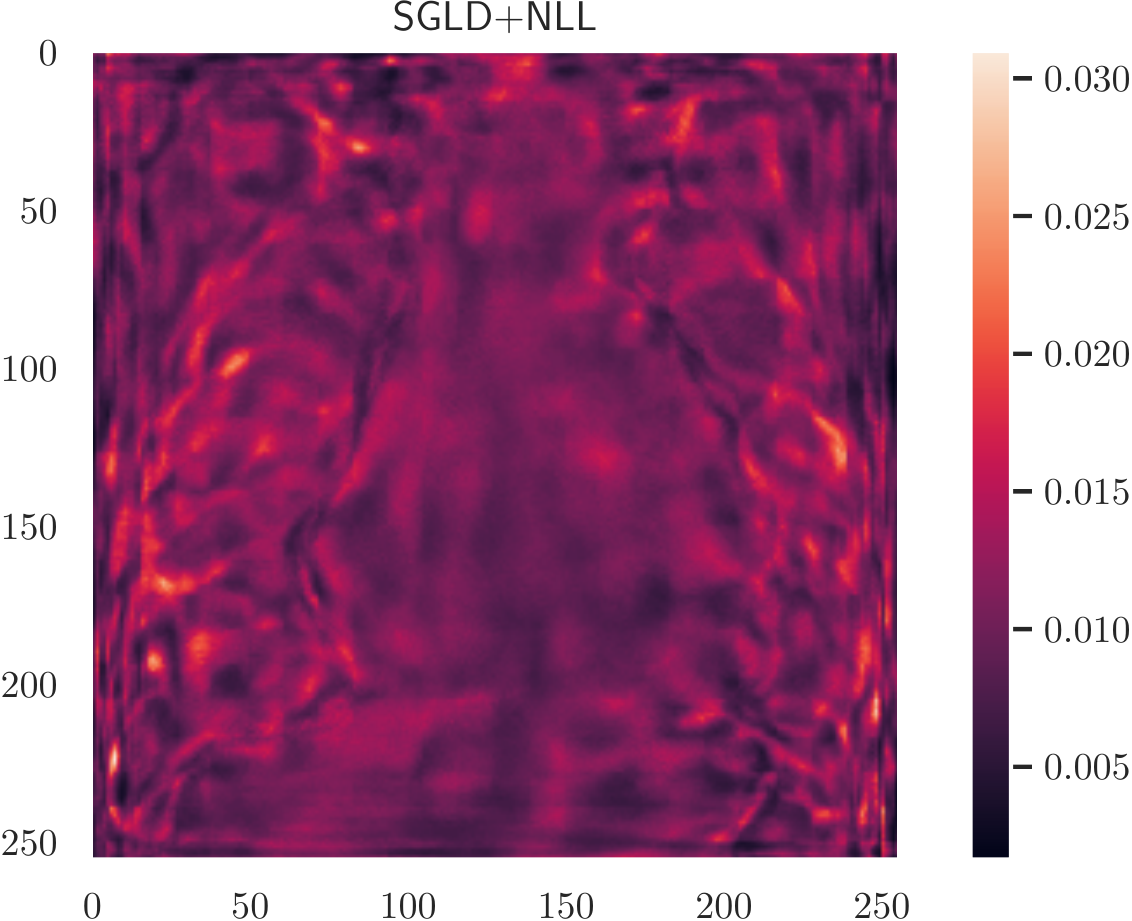}
    \includegraphics[height=2.6cm]{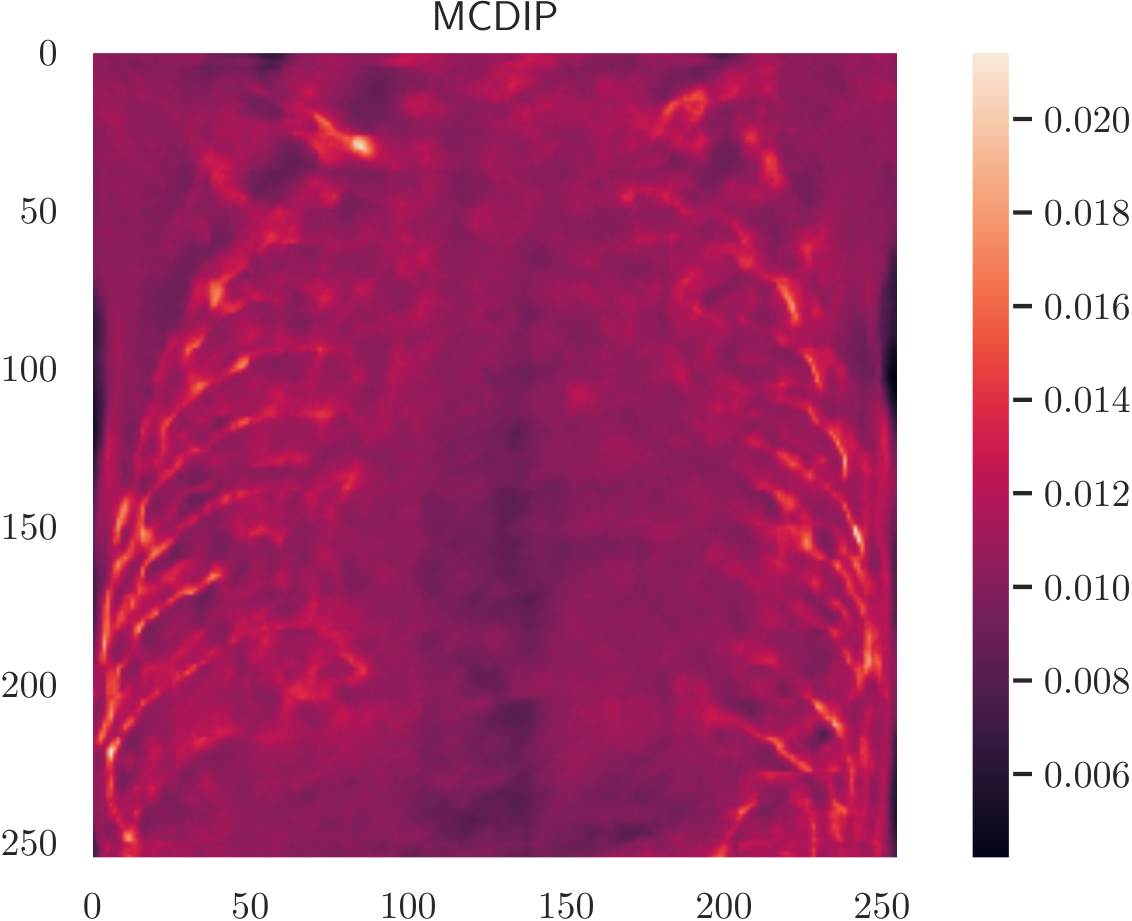} \\
    \caption{Calibration diagrams and uncertainty maps for SGLD+NLL with early stopping and MCDIP after convergence on the X-ray image (best viewed with digital zoom). (Left) The calibration diagrams show MSE vs.\ uncertainty and provide mean uncertainty (U) and UCE values. (Right) Uncertainty maps show per-pixel uncertainty.}
    \label{fig:calib}
\end{figure}

The calibration diagrams and corresponding UCE values in Fig.\,\ref{fig:calib} suggest that SGLD+NLL is better calibrated than MCDIP.
However, due to overfitting the noisy image without early stopping, the MSE from SGLD+NLL concentrates around $ 0.0 $, which results in low UCE values.
On the US and OCT image, the uncertainty from SGLD+NLL collapses to a single bin in the calibration diagram and does not allow to reason about the validness of the reconstructed image (see Fig.\,\ref{fig:calib2} in Appendix~\ref{app:figures}).
The uncertainty map from MCDIP shows high uncertainty at edges in the image and the mean uncertainty value (denoted by U) is close to the noise level in all three test images.

\section{Discussion \& Conclusion}

In this paper, we provided a new Bayesian approach to the deep image prior.
We used variational inference with Monte Carlo dropout and the full negative log-likelihood to both quantify epistemic and aleatoric uncertainty.
The presented approach is applied to medical image denoising of three different modalities and provides state-of-the-art performance in denoising with deep image prior.
Our Bayesian treatment does not need carefully applied early stopping and yields well-calibrated uncertainty.
We observe the estimated mean uncertainty value to be close to the noise level of the images.

The question remains why Bayesian deep image prior with SGLD does not work as well as expected and is outperformed by MC dropout.
First, SGLD as described by Welling et al.\ requires a strong decay of the step size to ensure convergence to a mode of the posterior \cite{Welling2011}.
Cheng et al.\ did not implement this and we followed their approach \cite{Cheng2019}.
After implementing the described step size decay, SGLD did not overfit the noisy image (see Appendix\,\ref{app:sgldlr}).
However, this requires a carefully chosen step size decay which is equivalent to early stopping.


The deep image prior framework is especially interesting in medical imaging as it does not require supervised training and thus does not suffer from hallucinations and other artifacts.
The presented approach can further be applied to deformable registration or other inverse image tasks in the medical domain.

\bibliographystyle{splncs04}
\bibliography{bibliography}

\begin{thebibliography}{10}
\providecommand{\url}[1]{\texttt{#1}}
\providecommand{\urlprefix}{URL }
\providecommand{\doi}[1]{https://doi.org/#1}

\bibitem{Agostinelli2013}
Agostinelli, F., Anderson, M.R., Lee, H.: Adaptive multi-column deep neural
  networks with application to robust image denoising. In: Advances in Neural
  Information Processing Systems. pp. 1493--1501 (2013)

\bibitem{Bishop2006}
Bishop, C.M.: Pattern Recognition and Machine Learning. Springer (2006)

\bibitem{Chang2000}
Chang, S.G., Yu, B., Vetterli, M.: {Adaptive wavelet thresholding for image
  denoising and compression}. {IEEE Transactions on Image Processing}
  \textbf{9}(9),  1532--1546 (2000). \doi{10.1109/83.862633}

\bibitem{Cheng2019}
Cheng, Z., Gadelha, M., Maji, S., Sheldon, D.: A bayesian perspective on the
  deep image prior. In: IEEE/CVF Conference on Computer Vision and Pattern
  Recognition. pp. 5443--5451 (2019)

\bibitem{Dabov2007}
Dabov, K., Foi, A., Katkovnik, V., Egiazarian, K.: Image denoising by sparse
  3-d transform-domain collaborative filtering. Transactions on Image
  Processing  \textbf{16}(8),  2080--2095 (2007). \doi{10.1109/TIP.2007.901238}

\bibitem{Gal2016}
Gal, Y., Ghahramani, Z.: Dropout as a bayesian approximation: Representing
  model uncertainty in deep learning. In: ICML. pp. 1050--1059 (2016)

\bibitem{Gondara2016}
Gondara, L.: Medical image denoising using convolutional denoising
  autoencoders. In: International Conference on Data Mining Workshops. pp.
  241--246 (2016). \doi{10.1109/ICDMW.2016.0041}

\bibitem{Guo2017}
Guo, C., Pleiss, G., Sun, Y., Weinberger, K.Q.: On calibration of modern neural
  networks. In: ICML. pp. 1321--1330 (2017)

\bibitem{Heuvel2018}
van~den Heuvel, T.L., de~Bruijn, D., de~Korte, C.L., Ginneken, B.v.: Automated
  measurement of fetal head circumference using 2d ultrasound images. PloS one
  \textbf{13}(8),  e0200412 (2018). \doi{10.1371/journal.pone.0200412}, uS
  dataset source

\bibitem{Hogg2018}
Hogg, R.V., McKean, J., Craig, A.T.: Introduction to Mathematical Statistics.
  Pearson, 8 edn. (2018)

\bibitem{Jain2009}
Jain, V., Seung, S.: Natural image denoising with convolutional networks. In:
  Advances in Neural Information Processing Systems. pp. 769--776 (2009)

\bibitem{Kendall2017}
Kendall, A., Gal, Y.: What uncertainties do we need in bayesian deep learning
  for computer vision? In: NeurIPS. pp. 5574--5584 (2017)

\bibitem{Kermany2018}
Kermany, D.S., Goldbaum, M., Cai, W., Valentim, C.C., Liang, H., Baxter, S.L.,
  McKeown, A., Yang, G., Wu, X., Yan, F., Dong, J., Prasadha, M.K., Pei, J.,
  Ting, M.Y., Zhu, J., Li, C., Hewett, S., Dong, J., Ziyar, I., Shi, A., Zhang,
  R., Zheng, L., Hou, R., Shi, W., Fu, X., Duan, Y., Huu, V.A., Wen, C., Zhang,
  E.D., Zhang, C.L., Li, O., Wang, X., Singer, M.A., Sun, X., Xu, J., Tafreshi,
  A., Lewis, M.A., Xia, H., Zhang, K.: Identifying medical diagnoses and
  treatable diseases by image-based deep learning. Cell  \textbf{172}(5),
  1122--1131 (2018). \doi{10.1016/j.cell.2018.02.010}

\bibitem{Kingma2013}
Kingma, D.P., Welling, M.: Auto-encoding variational bayes. In: ICLR (2014)

\bibitem{Laves2020}
Laves, M.H., Ihler, S., Fast, J.F., Kahrs, L.A., Ortmaier, T.: Well-calibrated
  regression uncertainty in medical imaging with deep learning. In: Medical
  Imaging with Deep Learning (2020)

\bibitem{Laves2019ECBO}
Laves, M.H., Ihler, S., Kahrs, L.A., Ortmaier, T.: Semantic denoising
  autoencoders for retinal optical coherence tomography. In: SPIE/OSA European
  Conference on Biomedical Optics. vol. 11078, pp. 86--89 (2019).
  \doi{10.1117/12.2526936}

\bibitem{Lee2018}
Lee, S., Lee, M.S., Kang, M.G.: Poisson--gaussian noise analysis and estimation
  for low-dose x-ray images in the nsct domain. Sensors  \textbf{18}(4), ~1019
  (2018)

\bibitem{Lempitsky2018}
Lempitsky, V., Vedaldi, A., Ulyanov, D.: {Deep Image Prior}. In: IEEE/CVF
  Conference on Computer Vision and Pattern Recognition. pp. 9446--9454 (2018).
  \doi{10.1109/CVPR.2018.00984}

\bibitem{Levi2019}
{Levi}, D., {Gispan}, L., {Giladi}, N., {Fetaya}, E.: Evaluating and
  calibrating uncertainty prediction in regression tasks. In: arXiv (2019),
  arXiv:1905.11659

\bibitem{Li2016}
Li, C., Chen, C., Carlson, D., Carin, L.: Preconditioned stochastic gradient
  langevin dynamics for deep neural networks. In: Proceedings of the Thirtieth
  AAAI Conference on Artificial Intelligence. pp. 1788--1794 (2016)

\bibitem{Michailovich2006}
Michailovich, O.V., Tannenbaum, A.: Despeckling of medical ultrasound images.
  Transactions on Ultrasonics, Ferroelectrics, and Frequency Control
  \textbf{53}(1),  64--78 (2006). \doi{10.1109/TUFFC.2006.1588392}

\bibitem{Rabbani2009}
Rabbani, H., Nezafat, R., Gazor, S.: Wavelet-domain medical image denoising
  using bivariate laplacian mixture model. Transactions on Biomedical
  Engineering  \textbf{56}(12),  2826--2837 (2009).
  \doi{10.1109/TBME.2009.2028876}

\bibitem{Salinas2007}
Salinas, H.M., Fernandez, D.C.: {Comparison of PDE-Based Nonlinear Diffusion
  Approaches for Image Enhancement and Denoising in Optical Coherence
  Tomography}. {IEEE Transactions on Medical Imaging}  \textbf{26}(6),
  761--771 (2007). \doi{10.1109/TMI.2006.887375}

\bibitem{Sotiras2013}
Sotiras, A., Davatzikos, C., Paragios, N.: Deformable medical image
  registration: A survey. IEEE Transactions on Medical Imaging  \textbf{32}(7),
   1153--1190 (2013). \doi{10.1109/TMI.2013.2265603}

\bibitem{Wang2014}
Wang, N., Tao, D., Gao, X., Li, X., Li, J.: A comprehensive survey to face
  hallucination. International Journal of Computer Vision  \textbf{106}(1),
  9--30 (2014)

\bibitem{Welling2011}
Welling, M., Teh, Y.W.: Bayesian learning via stochastic gradient langevin
  dynamics. In: ICML. pp. 681--688 (2011)

\bibitem{Zabic2013}
{\v{Z}}abi{\'c}, S., Wang, Q., Morton, T., Brown, K.M.: A low dose simulation
  tool for ct systems with energy integrating detectors. Medical physics
  \textbf{40}(3),  031102 (2013). \doi{10.1118/1.4789628}

\bibitem{Zhang2017}
{Zhang}, K., {Zuo}, W., {Chen}, Y., {Meng}, D., {Zhang}, L.: Beyond a gaussian
  denoiser: Residual learning of deep cnn for image denoising. IEEE
  Transactions on Image Processing  \textbf{26}(7),  3142--3155 (2017).
  \doi{10.1109/TIP.2017.2662206}

\end{thebibliography}

\newpage

\appendix

\section{Appendix}

\subsection{Additional Figures}
\label{app:figures}
\enlargethispage{1cm}

\begin{figure}[!h]
    \centering
    \parbox[t]{2.25cm}{\centering
        \includegraphics[width=2.2cm]{CNV-9997680-30.png} \\
        \vspace{1mm}
        \includegraphics[width=2.2cm]{081_HC.jpg} \\
        \vspace{1mm}
        \includegraphics[width=2.2cm]{BACTERIA-1351146-0006.jpg} \\
        ground truth
    }
    \parbox[t]{2.25cm}{\centering
        \includegraphics[width=2.2cm]{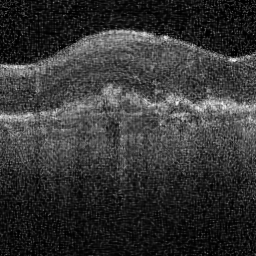} \\
        \vspace{1mm}
        \includegraphics[width=2.2cm]{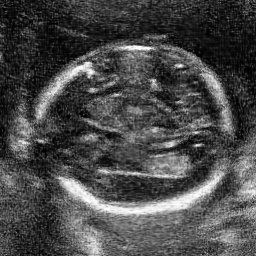} \\
        \vspace{1mm}
        \includegraphics[width=2.2cm]{recon_xray_dip.png} \\
        DIP
    }
    \parbox[t]{2.25cm}{\centering
        \includegraphics[width=2.2cm]{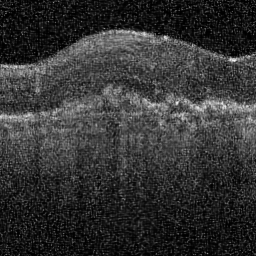} \\
        \vspace{1mm}
        \includegraphics[width=2.2cm]{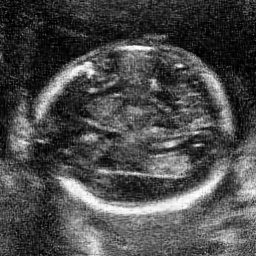} \\
        \vspace{1mm}
        \includegraphics[width=2.2cm]{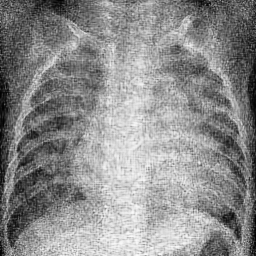} \\
        SGLD
    }
    \parbox[t]{2.25cm}{\centering
        \includegraphics[width=2.2cm]{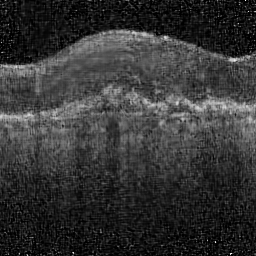} \\
        \vspace{1mm}
        \includegraphics[width=2.2cm]{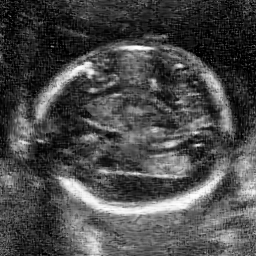} \\
        \vspace{1mm}
        \includegraphics[width=2.2cm]{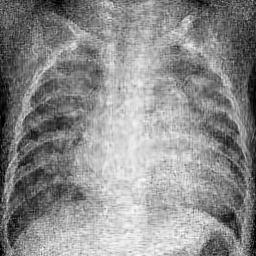} \\
        SGLD+NLL
    }
    \parbox[t]{2.25cm}{\centering
        \includegraphics[width=2.2cm]{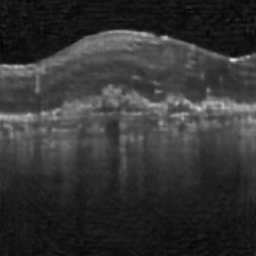} \\
        \vspace{1mm}
        \includegraphics[width=2.2cm]{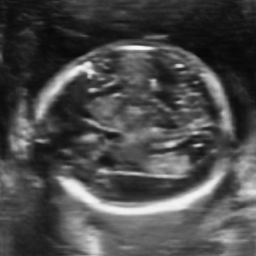} \\
        \vspace{1mm}
        \includegraphics[width=2.2cm]{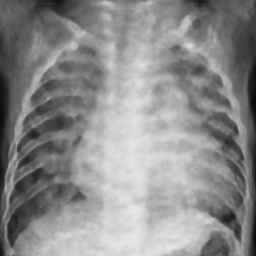} \\
        MCDIP
    }
    \caption{Denoised images after convergence.}
    \label{fig:denoising_results_2}
\end{figure}
\begin{figure}[!h]
    \centering
    \parbox[t]{2.25cm}{\centering
        \includegraphics[width=2.2cm]{CNV-9997680-30.png} \\
        \vspace{1mm}
        \includegraphics[width=2.2cm]{081_HC.jpg} \\
        \vspace{1mm}
        \includegraphics[width=2.2cm]{BACTERIA-1351146-0006.jpg} \\
        ground truth
    }
    \parbox[t]{2.25cm}{\centering
        \includegraphics[width=2.2cm]{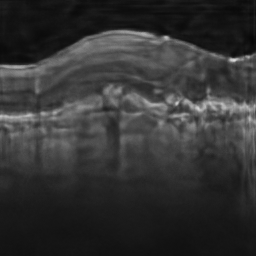} \\
        \vspace{1mm}
        \includegraphics[width=2.2cm]{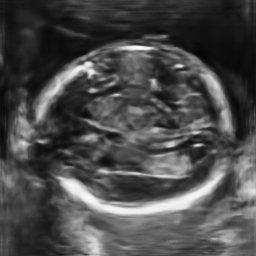} \\
        \vspace{1mm}
        \includegraphics[width=2.2cm]{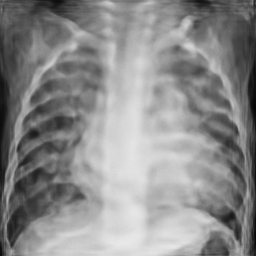} \\
        DIP
    }
    \parbox[t]{2.25cm}{\centering
        \includegraphics[width=2.2cm]{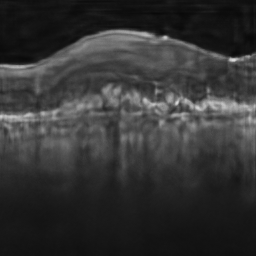} \\
        \vspace{1mm}
        \includegraphics[width=2.2cm]{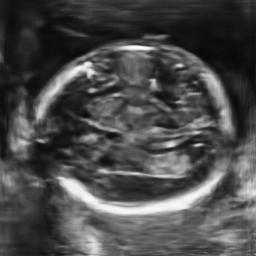} \\
        \vspace{1mm}
        \includegraphics[width=2.2cm]{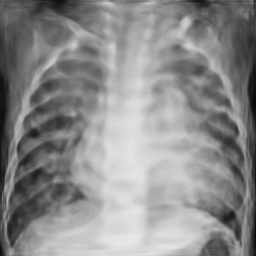} \\
        SGLD
    }
    \parbox[t]{2.25cm}{\centering
        \includegraphics[width=2.2cm]{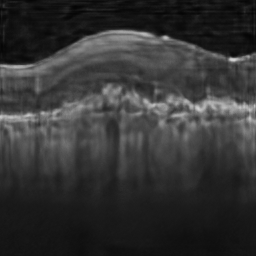} \\
        \vspace{1mm}
        \includegraphics[width=2.2cm]{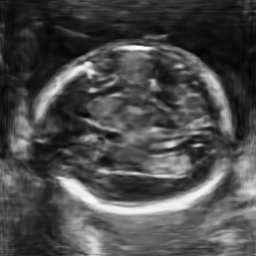} \\
        \vspace{1mm}
        \includegraphics[width=2.2cm]{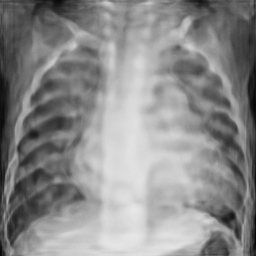} \\
        SGLD+NLL
    }
    \parbox[t]{2.25cm}{\centering
        \includegraphics[width=2.2cm]{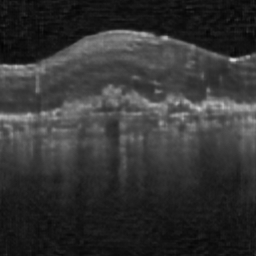} \\
        \vspace{1mm}
        \includegraphics[width=2.2cm]{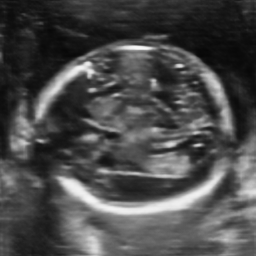} \\
        \vspace{1mm}
        \includegraphics[width=2.2cm]{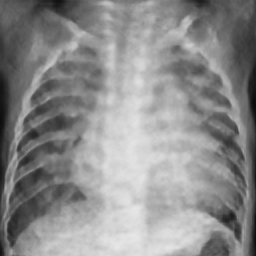} \\
        MCDIP
    }
    \caption{Denoised images with early-stopping applied.}
    \label{fig:denoising_results_early_stopping}
\end{figure}
\pagebreak

\begin{figure}
    \centering
    \includegraphics[scale=0.48]{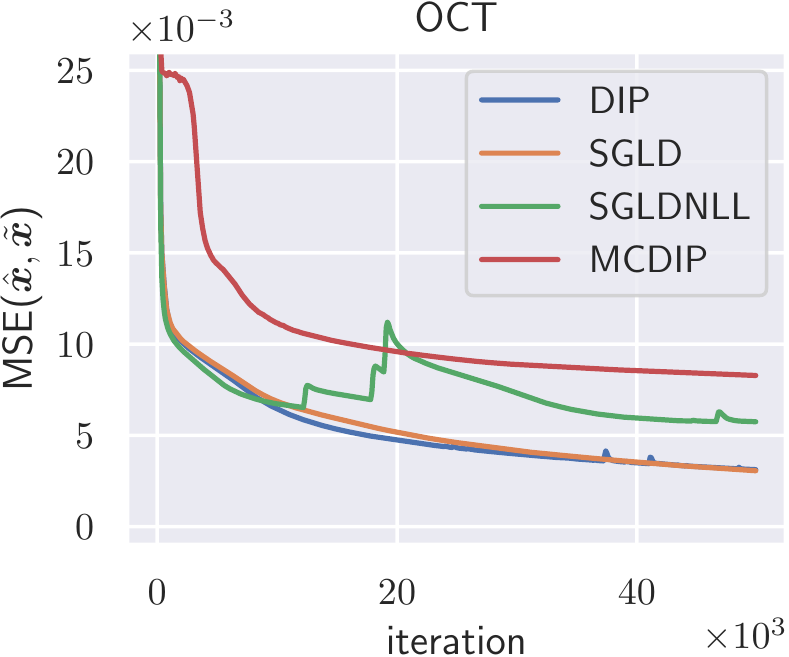} \includegraphics[scale=0.48]{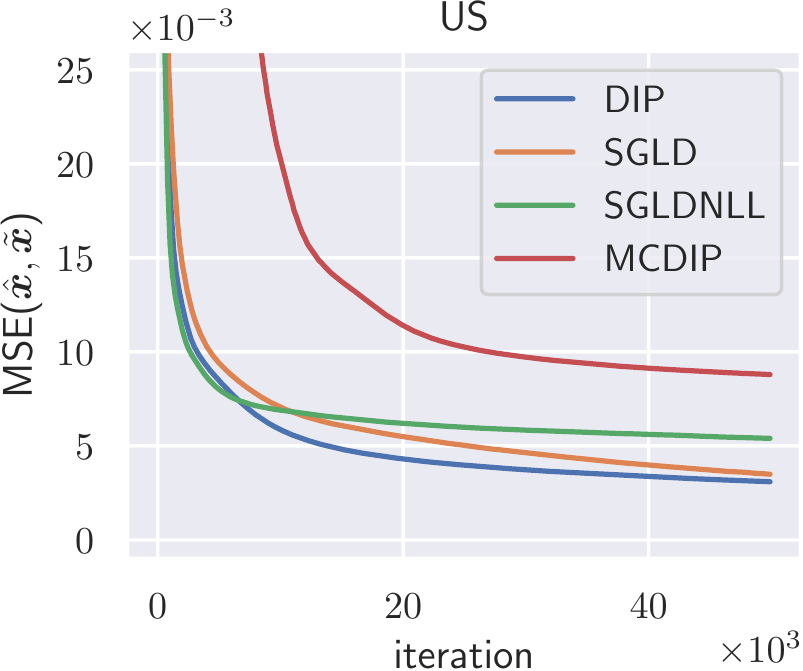} \includegraphics[scale=0.48]{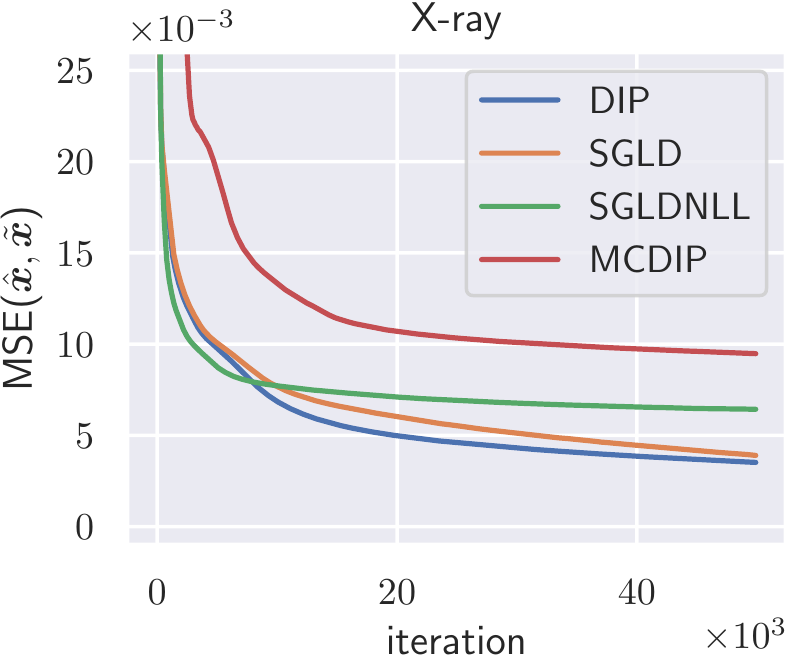} \\
    \vspace{2mm}
    \includegraphics[scale=0.48]{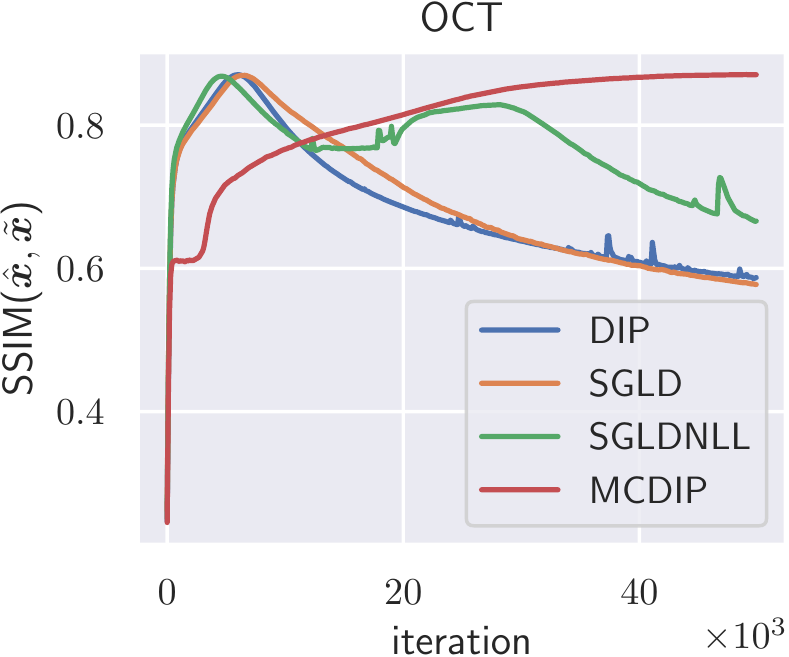}
    \includegraphics[scale=0.48]{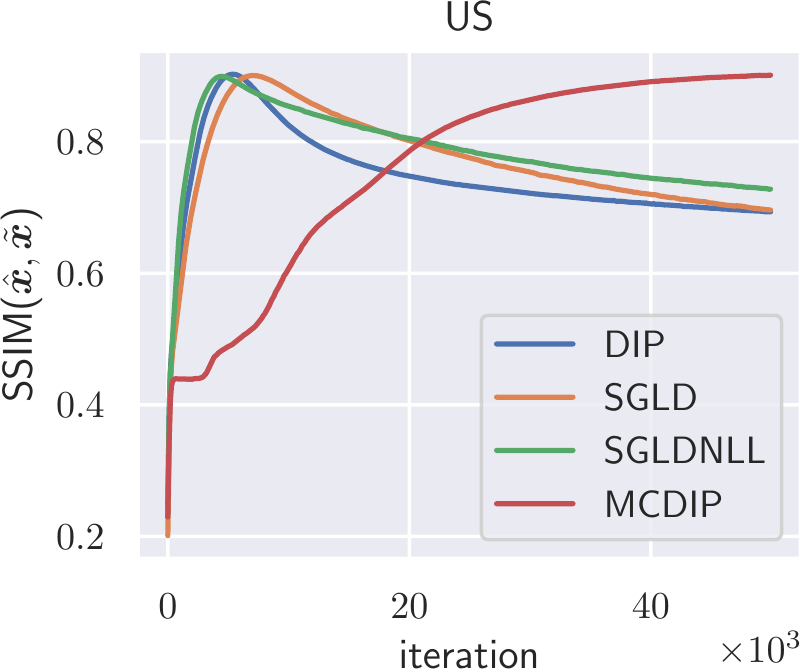}
    \includegraphics[scale=0.48]{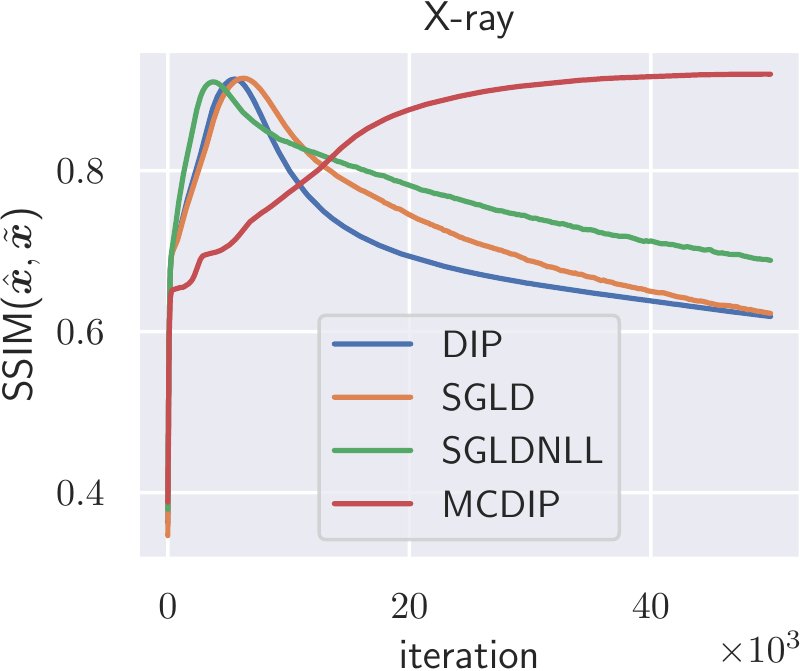}
   \caption{MSE (top row) between denoised $ \hat{\bm{x}} $ image and noisy image $ \tilde{\bm{x}} $ and SSIM (bottom row) between denoised $ \hat{\bm{x}} $ image and ground truth $ \bm{x} $ vs.\ iteration. Only MCDIP does not overfit the noisy image and converges with highest similarity to the ground truth. Despite the claim of the authors, SGLD suffers from overfitting and creates the need for carefully applied early stopping \cite{Cheng2019}. Note: We compared both our own implementation of SGLD and the original code provided by the authors. The plots show means from 3 runs with different random initialization.}
    \label{fig:mse_ssim}
\end{figure}

\begin{figure}
    \centering
    \includegraphics[height=2.3cm]{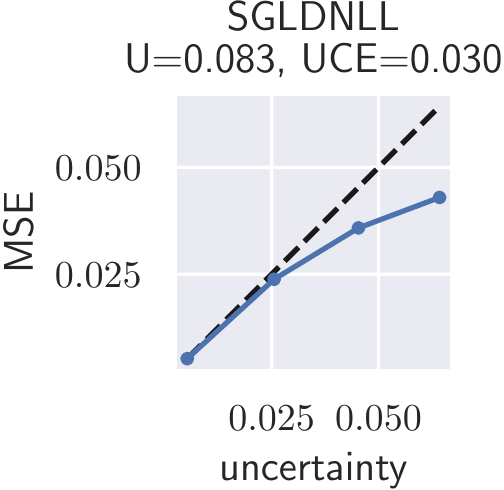}
    \includegraphics[height=2.3cm]{calib_mcdip_X-ray-crop.pdf}
    \includegraphics[height=2.3cm]{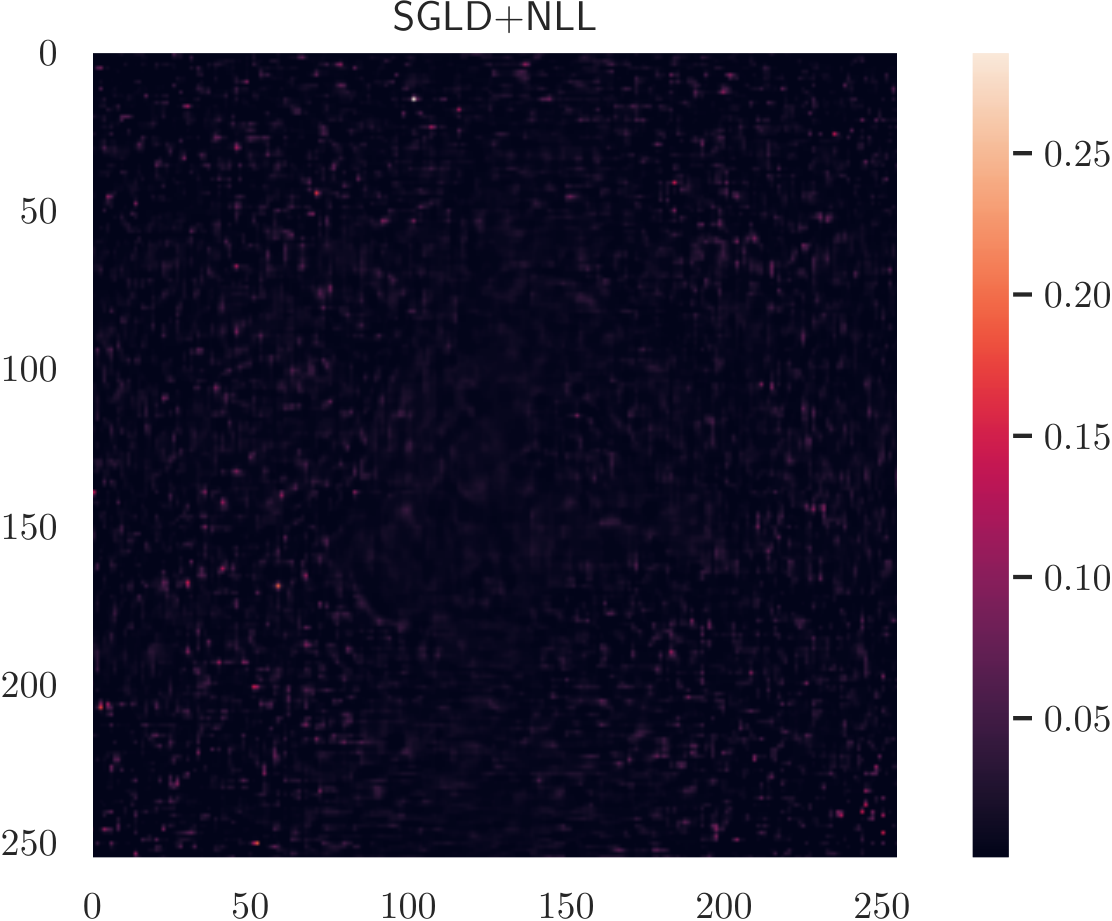}
    \includegraphics[height=2.3cm]{uncert_map_X-ray_mcdip-crop.pdf} \\
    \includegraphics[height=2.3cm]{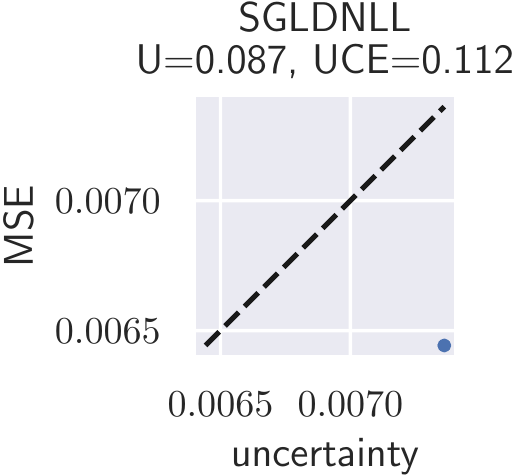}
    \includegraphics[height=2.3cm]{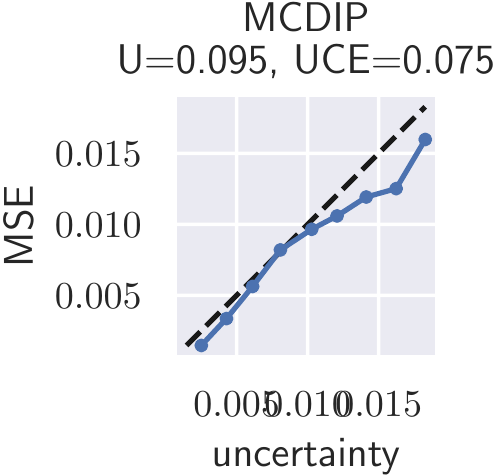}
    \includegraphics[height=2.3cm]{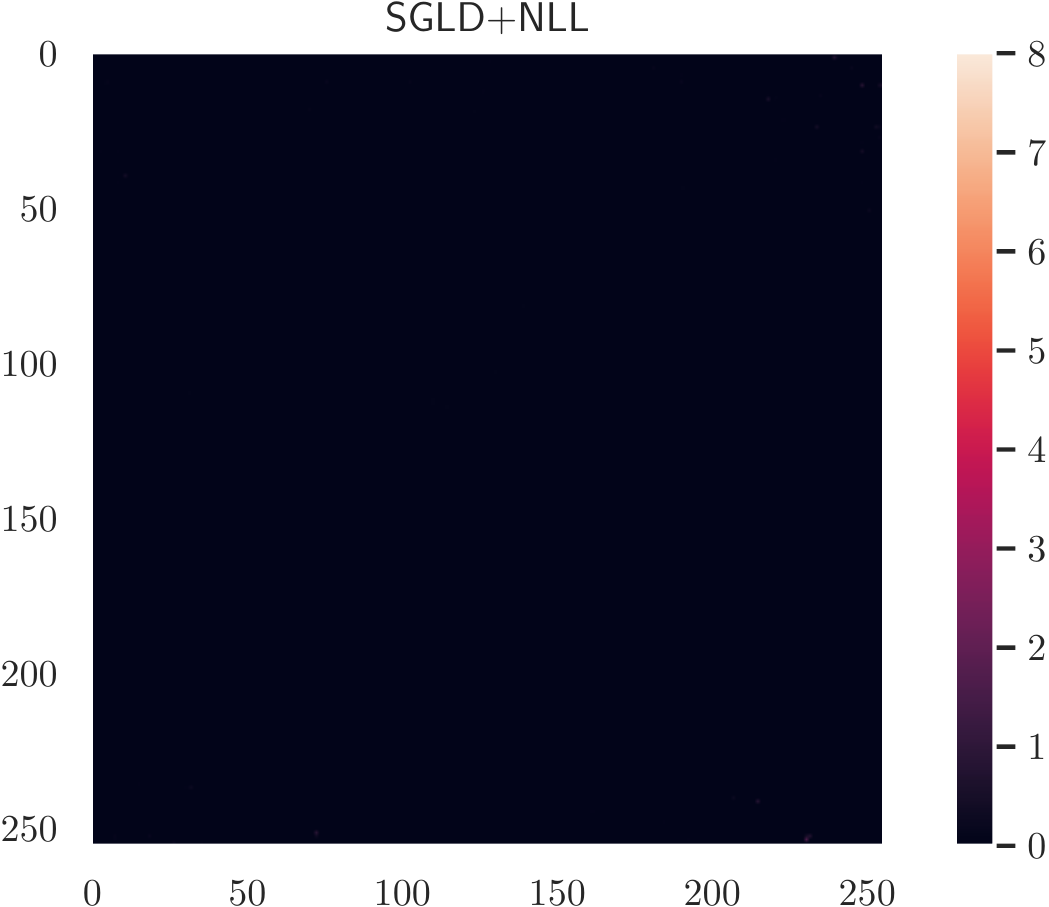}
    \includegraphics[height=2.3cm]{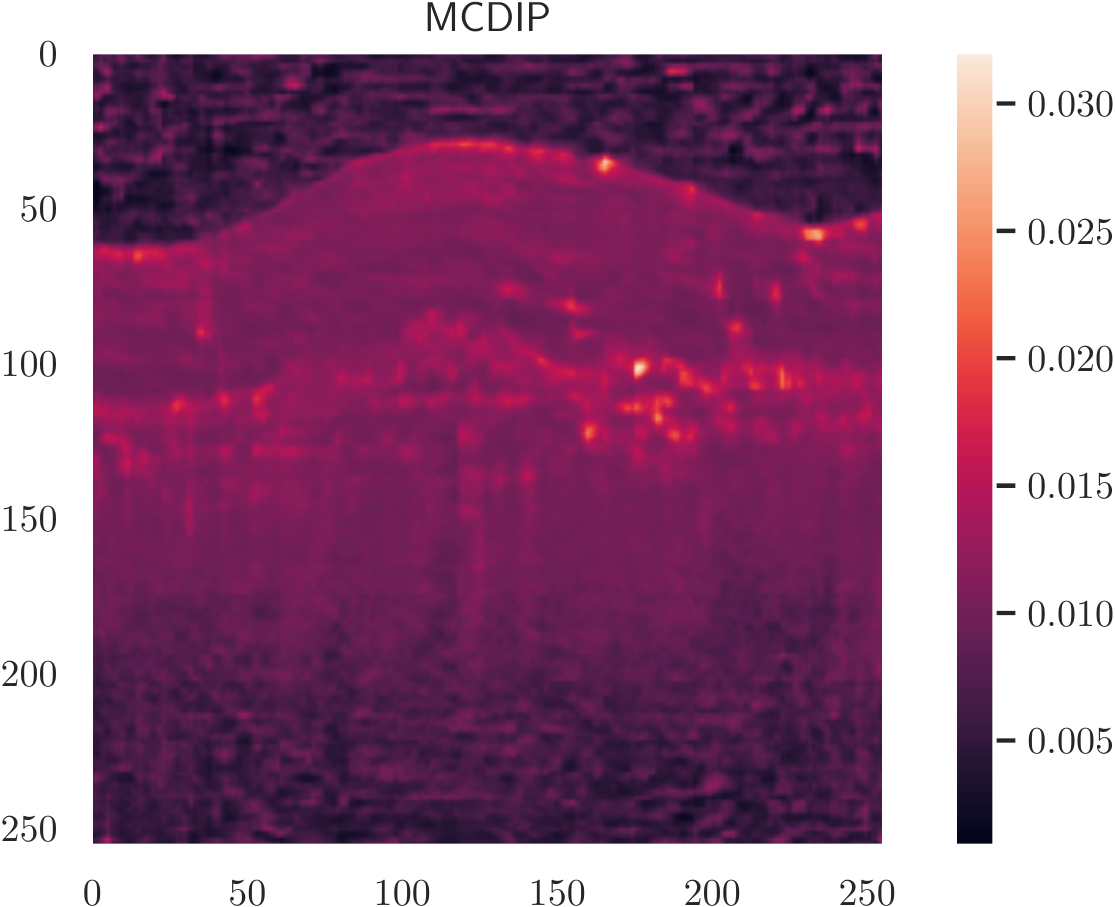} \\
    \includegraphics[height=2.3cm]{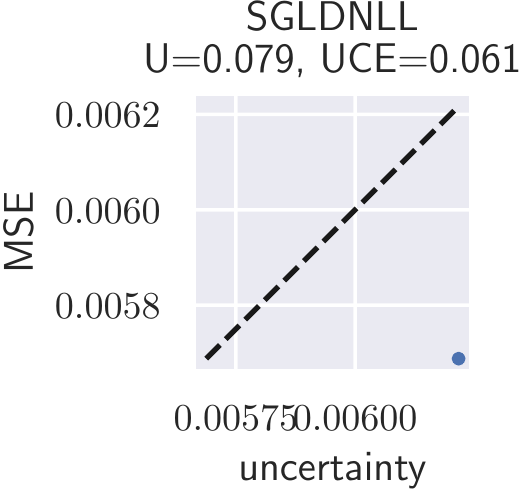}
    \includegraphics[height=2.3cm]{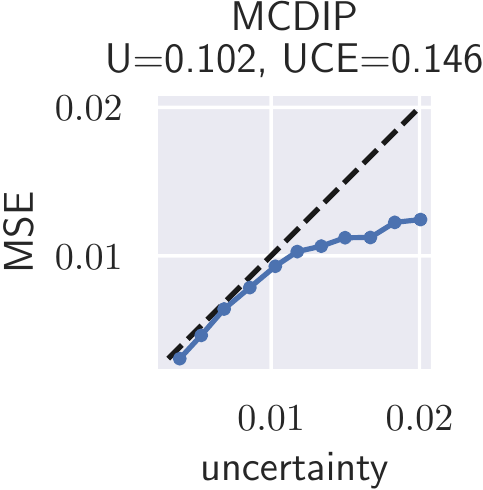}
    \includegraphics[height=2.3cm]{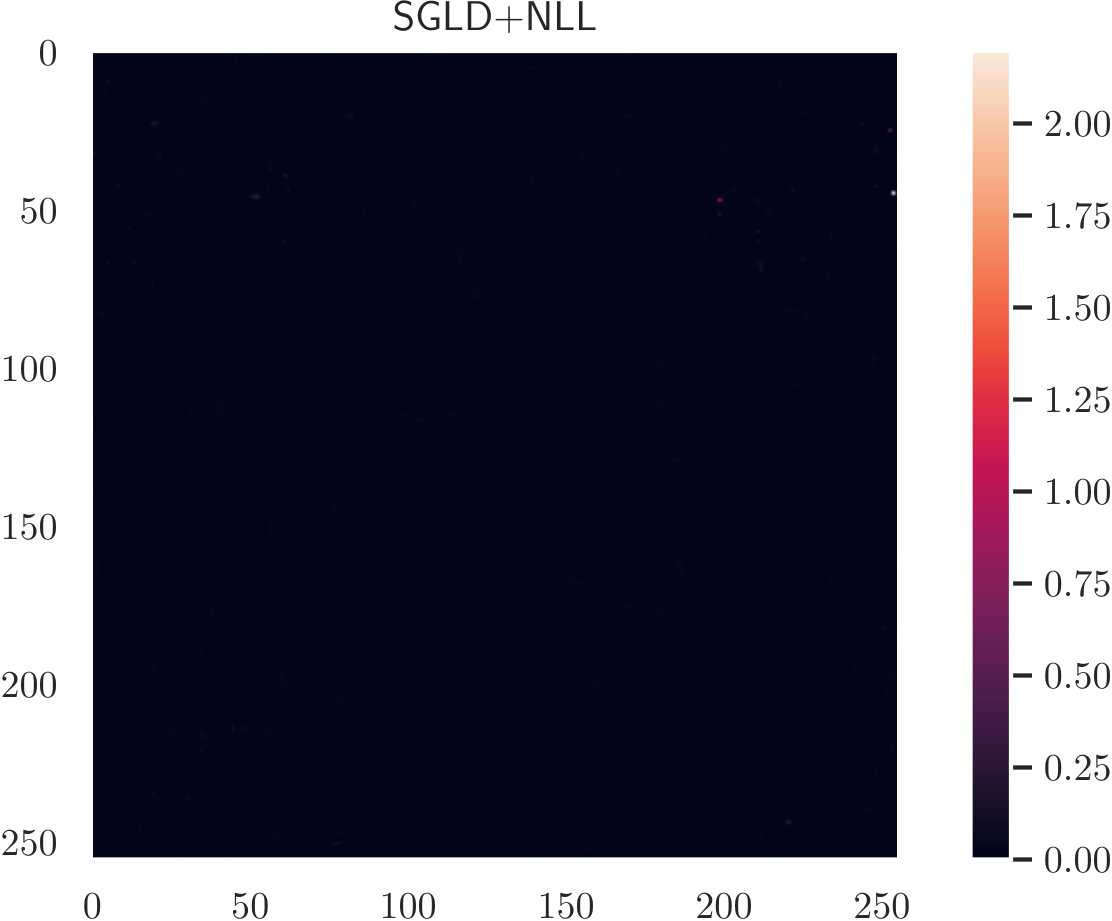}
    \includegraphics[height=2.3cm]{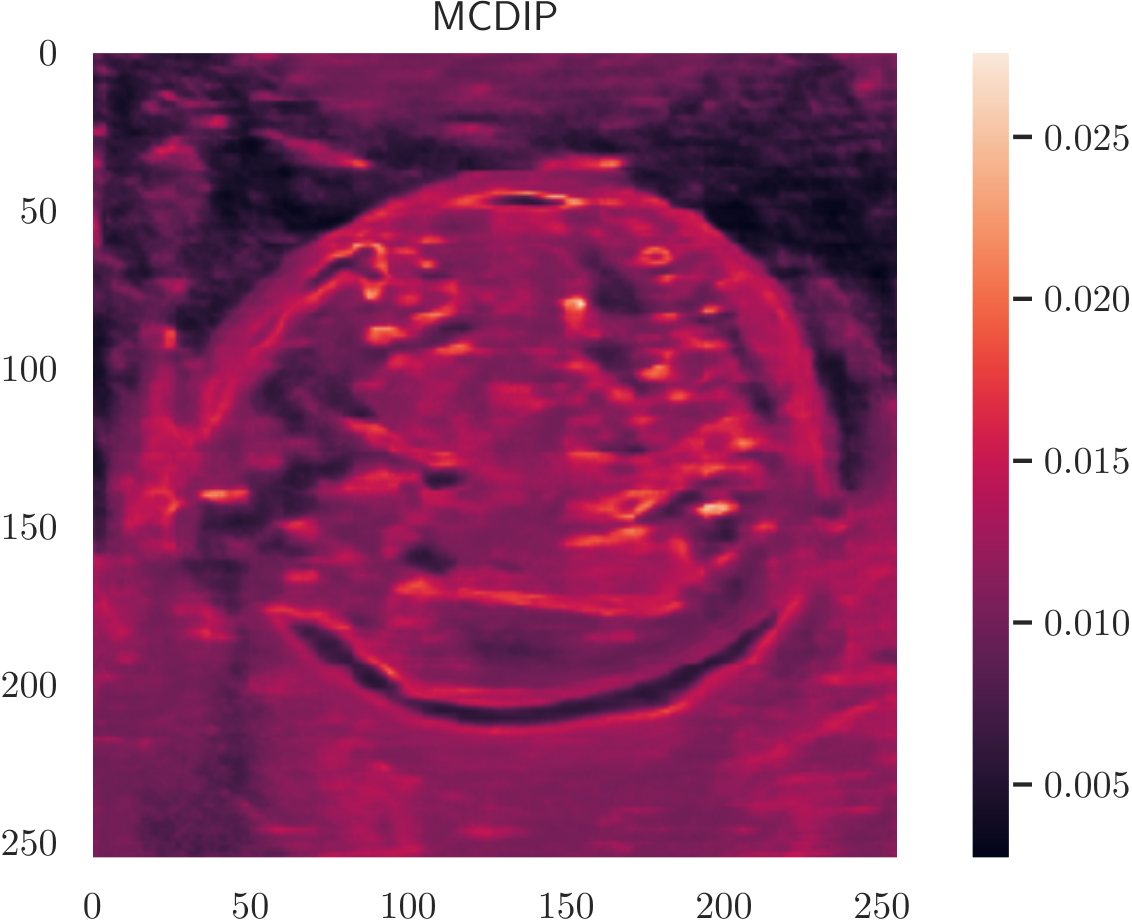}
    \caption{Calibration diagrams and uncertainty maps for SGLD+NLL and MCDIP after convergence (best viewed with digital zoom). (Left) The calibration diagrams show MSE vs.\ uncertainty and provide mean uncertainty (U) and UCE values. (Right) Uncertainty maps show per-pixel uncertainty. Due to overfitting, the MSE and uncertainty from SGLD+NLL concentrates around $0.0$.}
    \label{fig:calib2}
\end{figure}

\begin{figure}
    \centering
    \includegraphics[height=2.3cm]{calib_sgldnll_argmax_X-ray-crop.pdf}
    \includegraphics[height=2.3cm]{calib_mcdip_X-ray-crop.pdf}
    \includegraphics[height=2.3cm]{uncert_map_X-ray_sgldnll_earlystop-crop.pdf}
    \includegraphics[height=2.3cm]{uncert_map_X-ray_mcdip-crop.pdf} \\
    \includegraphics[height=2.3cm]{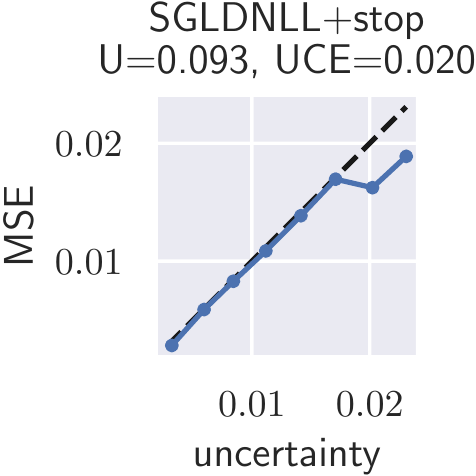}
    \includegraphics[height=2.3cm]{calib_mcdip_OCT-crop.pdf}
    \includegraphics[height=2.3cm]{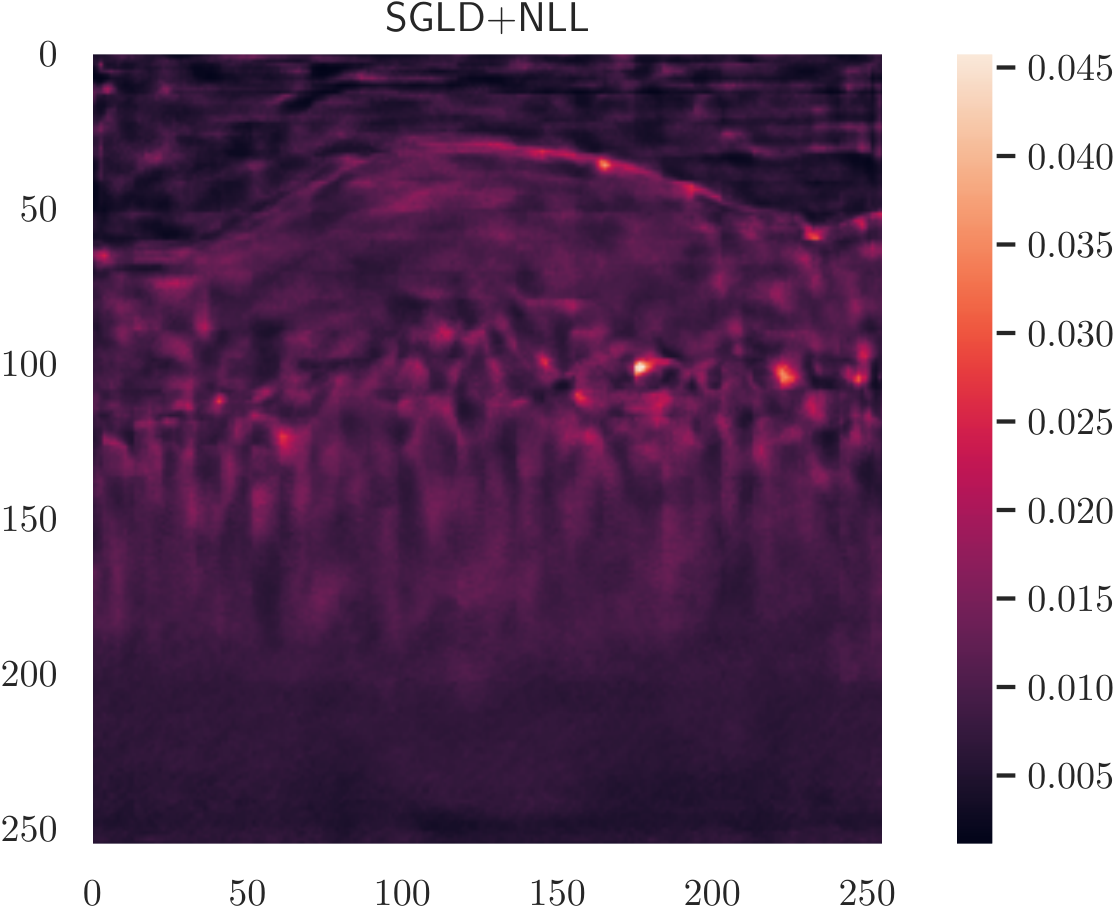}
    \includegraphics[height=2.3cm]{uncert_map_OCT_mcdip-crop.pdf} \\
    \includegraphics[height=2.3cm]{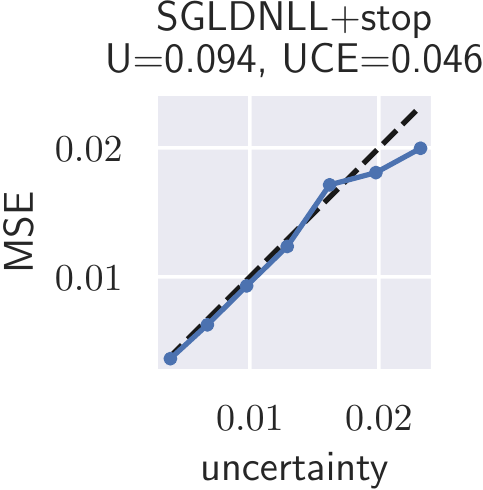}
    \includegraphics[height=2.3cm]{calib_mcdip_us-crop.pdf}
    \includegraphics[height=2.3cm]{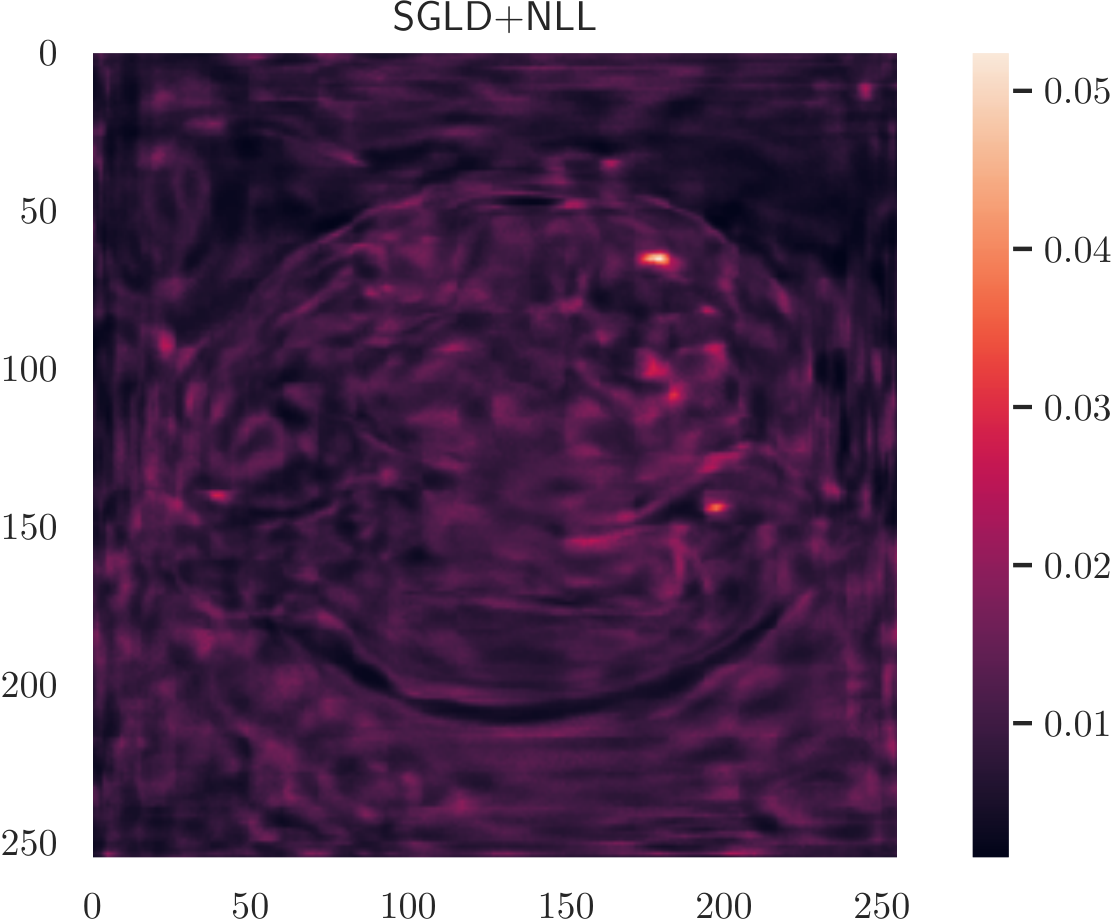}
    \includegraphics[height=2.3cm]{uncert_map_US_mcdip-crop.pdf}
    \caption{Calibration diagrams and uncertainty maps for SGLD+NLL after early stopping and MCDIP after convergence (best viewed with digital zoom). (Left) The calibration diagrams show MSE vs.\ uncertainty and provide mean uncertainty (U) and UCE values. (Right) Uncertainty maps show per-pixel uncertainty.}
\end{figure}

\subsection{Additional Tables}
\label{app:tables}
\enlargethispage{1\baselineskip}

\begingroup
\setlength{\tabcolsep}{6pt}
\begin{table}[!h]
    \centering
    \caption{PSNR with early-stopping.}
    \label{tab:psnr_early_stopping}
    \begin{tabular}{rcccc}
        \toprule
        PSNR   & DIP & SGLD & SGLD+NLL & MCDIP \\
        \cmidrule{2-5}
        OCT    & $ 29.88 \pm 0.02 $ & $ 29.89 \pm 0.05 $ & $ 29.77 \pm 0.07 $ & $ \mathbf{29.92} \pm 0.03 $ \\
        US     & $ 29.74 \pm 0.05 $ & $ \mathbf{29.78} \pm 0.02 $ & $ 29.54 \pm 0.03 $ & $ 29.7 \pm 0.07 $ \\
        X-ray  & $ 30.91 \pm 0.05 $ & $ 30.98 \pm 0.09 $ & $ 30.74 \pm 0.03 $ & $ \mathbf{31.22} \pm 0.1 $ \\
        \bottomrule
    \end{tabular}
\end{table}

\begin{table}[!h]
    \centering
    \caption{SSIM after convergence.}
    \label{tab:ssim}
    \begin{tabular}{rcccc}
        \toprule
        SSIM   & DIP & SGLD & SGLD+NLL & MCDIP \\
        \cmidrule{2-5}
        OCT    & $ 0.582 \pm 0.0 $ & $ 0.574 \pm 0.0 $ & $ 0.66 \pm 0.0 $ & $ \mathbf{0.872} \pm 0.0 $ \\
        US     & $ 0.687 \pm 0.0 $ & $ 0.703 \pm 0.0 $ & $ 0.723 \pm 0.0 $ & $ \mathbf{0.902} \pm 0.0 $ \\
        X-ray  & $ 0.625 \pm 0.0 $ & $ 0.631 \pm 0.0 $ & $ 0.686 \pm 0.0 $ & $ \mathbf{0.922} \pm 0.0 $ \\
        \bottomrule
    \end{tabular}
\end{table}

\begin{table}[!h]
    \centering
    \caption{SSIM with early-stopping.}
    \label{tab:ssim_early_stopping}
    \begin{tabular}{rcccc}
        \toprule
        SSIM   & DIP & SGLD & SGLD+NLL & MCDIP \\
        \cmidrule{2-5}
        OCT    & $ 0.872 \pm 0.0 $ & $ 0.872 \pm 0.0 $ & $ 0.872 \pm 0.0 $ & $ 0.872 \pm 0.0 $ \\
        US     & $ 0.902 \pm 0.0 $ & $ \mathbf{0.903} \pm 0.0 $ & $ 0.899 \pm 0.0 $ & $ \mathbf{0.903} \pm 0.0 $ \\
        X-ray  & $ 0.915 \pm 0.0 $ & $ 0.917 \pm 0.0 $ & $ 0.912 \pm 0.0 $ & $ \mathbf{0.923} \pm 0.0 $ \\
        \bottomrule
    \end{tabular}
\end{table}
\endgroup

\subsection{SGLD With Step Size Decay}
\label{app:sgldlr}

Additionall, we implement SGLD with step size decay as described by Welling et al.\ \cite{Welling2011}.
The step size $ \epsilon $ is used to scale the parameter update in the SGD step (i.e.\ the learning rate) and defines the variance of the noise that is injected into the gradients.
Here, we reduce the step size at each step $ t $ exponentially with $ \epsilon_{t} = 0.999^{t} \epsilon_{0} $.
To satisfy the step size property (Eq.\,(2) in \cite{Welling2011}), we fix the step size once it decreases below 1e-8.
We observe no overfitting of the noisy image with step size decay (see Fig.\,\ref{fig:sgldlr}).
However, the quality of the resulting denoised image is very sensitive to the decay scheme.
Choosing a decrease that is too low (i.e.\ $ \epsilon_{t} = 0.9999^{t} \epsilon_{0} $) results in overfitting; a decrease that is too high  (i.e.\ $ \epsilon_{t} = 0.99^{t} \epsilon_{0} $) results in convergence to a subpar reconstruction.
This is equivalent to carefully applied early stopping and therefore nullifies the advantage of SGLD for denoising of medical images.

\begin{figure}
    \centering
    \includegraphics[scale=0.48]{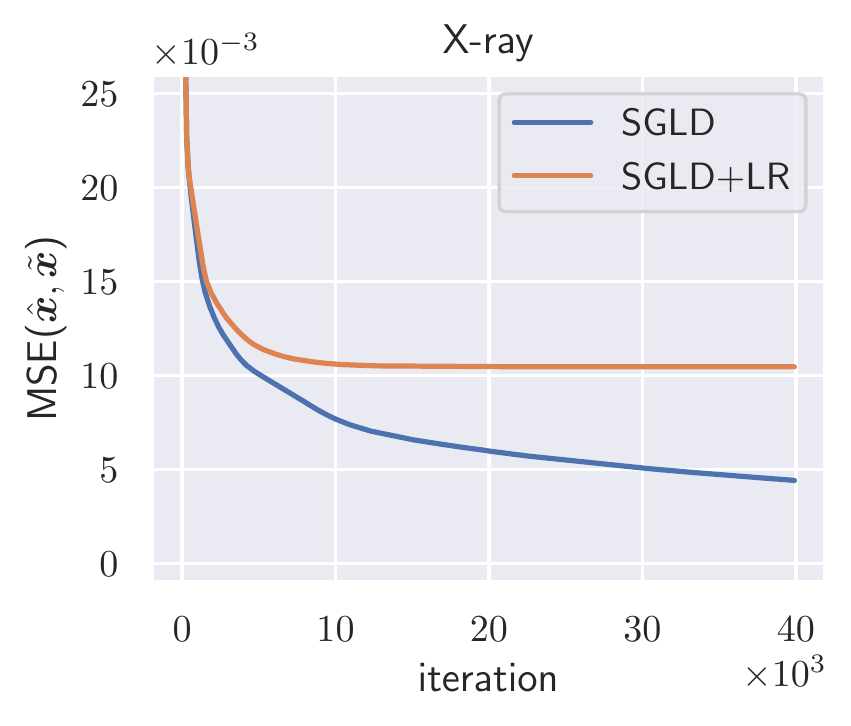}
    \includegraphics[scale=0.48]{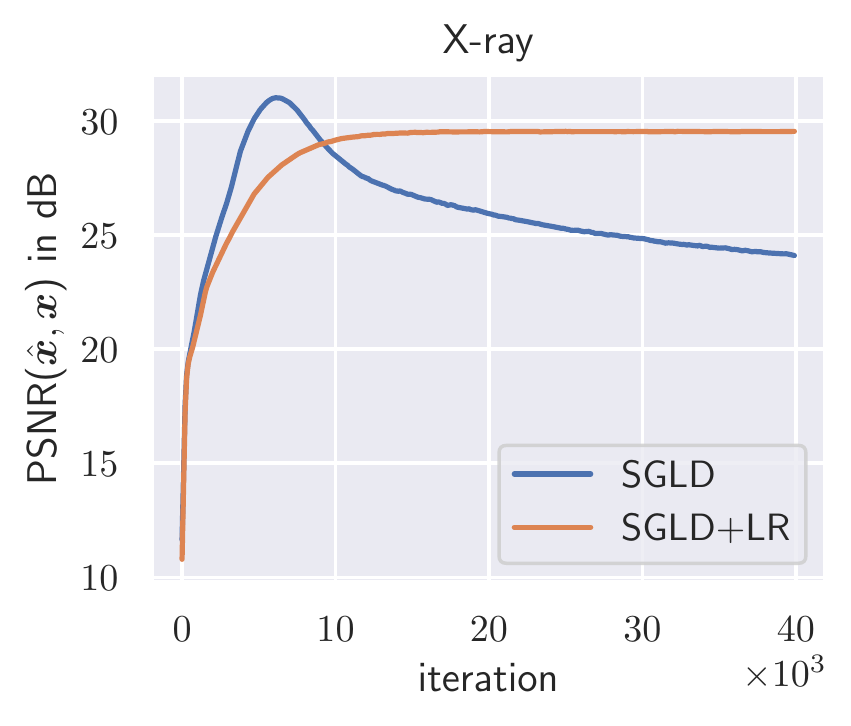} ~
    \includegraphics[height=3.4cm]{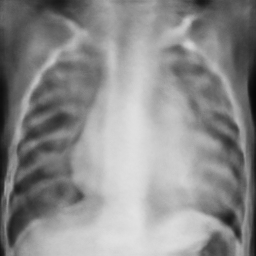}
    \caption{Comparison of SGLD and SGLD+LR (with step size decay). Carefully chosen step size decay impedes overfitting the noisy image. (Right) Reconstruction of SGLD+LR after convergence (no early stopping applied).}
    \label{fig:sgldlr}
\end{figure}

\subsection{Downsampling}
\label{app:downsampling}

Here, we provide justification why downsampling of an image by averaging neighboring pixels reduces the noise level and can be used as an approximation to a ground truth noise-free image (by sacrificing image resolution).

\begin{proposition}
    Downsampling of an image reduces the observation noise.
\end{proposition}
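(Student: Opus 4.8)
The plan is to reduce the statement to an elementary second–moment calculation under the paper's own noise model. I model the noisy image pixel-wise as $\tilde{x}_i = x_i + n_i$, where $x_i$ is the deterministic noise-free value and the $n_i$ are additive white Gaussian noise samples with $\mathbb{E}[n_i]=0$, $\mathrm{Var}(n_i)=\sigma^2$, and $\mathbb{E}[n_i n_{i'}]=0$ for $i\neq i'$. Downsampling by averaging replaces a block $\mathcal{N}_j$ of $k$ neighboring input pixels by one output pixel $\bar{x}_j = \tfrac{1}{k}\sum_{i\in\mathcal{N}_j}\tilde{x}_i$. First I would split this into a signal part and a noise part, $\bar{x}_j = \mu_j + \bar{n}_j$ with $\mu_j = \tfrac{1}{k}\sum_{i\in\mathcal{N}_j} x_i$ and $\bar{n}_j = \tfrac{1}{k}\sum_{i\in\mathcal{N}_j} n_i$, so that $\mu_j$ plays the role of the (block-averaged) downsampled signal and $\bar{n}_j$ the residual noise.

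Next I would compute the first two moments of $\bar{n}_j$. Linearity of expectation gives $\mathbb{E}[\bar{n}_j] = \tfrac{1}{k}\sum_{i\in\mathcal{N}_j}\mathbb{E}[n_i] = 0$, so the averaging is unbiased with respect to the block mean. For the variance, expanding the square and using pairwise uncorrelatedness kills all cross terms, leaving $\mathrm{Var}(\bar{n}_j) = \tfrac{1}{k^2}\sum_{i,i'\in\mathcal{N}_j}\mathbb{E}[n_i n_{i'}] = \tfrac{1}{k^2}\sum_{i\in\mathcal{N}_j}\sigma^2 = \sigma^2/k$. Hence the noise power in the downsampled image is smaller by exactly the block size $k$, equivalently the noise standard deviation shrinks by a factor $\sqrt{k}$; this is the quantitative content of the proposition. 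I would then remark that the same conclusion carries over when a low-pass (anti-aliasing) prefilter is applied before subsampling, since a linear filter with nonnegative unit-sum weights only mixes zero-mean noise and cannot increase its variance.

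The hard part will be making precise that the signal is essentially preserved, i.e. that $\mu_j$ is a faithful stand-in for the original pixel values on $\mathcal{N}_j$, because a bias argument is needed there and the moment calculation alone does not supply it. I would control this by bounding $|\mu_j - x_i| \le \max_{i,i'\in\mathcal{N}_j}|x_i - x_{i'}|$, which for a spatially smooth underlying image is of order $\|\nabla \bm{x}\|_\infty \cdot \mathrm{diam}(\mathcal{N}_j)$, so the downsampling bias is governed by the local image gradient and the block diameter and is negligible compared with $\sigma$ whenever $\bm{x}$ varies slowly on the scale of the block — precisely the ``highly correlated neighboring pixels'' regime invoked in the main text. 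A fully rigorous version would therefore attach a regularity hypothesis on $\bm{x}$ (e.g. Lipschitz or bounded variation); for the purposes of this paper it suffices to present the $\sigma^2/k$ variance identity and note that the anti-aliasing prefilter makes the residual bias small relative to the $\sqrt{k}$-fold reduction of the noise amplitude, at the cost of the discarded high-frequency content (reduced resolution).
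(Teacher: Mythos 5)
Your proposal is correct and follows essentially the same route as the paper's proof: decompose each pixel into signal plus zero-mean uncorrelated noise and observe that averaging divides the noise variance by the number of pixels combined (the paper does this for two neighboring pixels, yielding $\sigma^2/2$, while you state the general $\sigma^2/k$ block version). Your explicit bound on the signal bias via local image variation is a slightly more careful treatment of what the paper simply assumes as $\mu_x \approx \mu_y$, but it is a refinement of the same argument rather than a different one.
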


\begin{proof}
    Let $ X = \mu_{x} + \varepsilon_{x} $ and $ Y = \mu_{y} + \varepsilon_{y} $ be two neighboring pixels affected by additive i.i.d.\ noise $ \varepsilon_{x} , \varepsilon_{y} \sim \mathcal{N}(0, \sigma^{2}) $.
    The pixels are assumed to be uncorrelated to noise.
    Pixels in a local neighborhood are highly correlated and assumed to be of high similarity $ \mu_{x} \approx \mu_{y} = \mu $.
    Let $ Z = \tfrac{1}{2} \left( X + Y \right) $ be the average of two neighboring pixels (i.e.\ the result of downsampling).
    The expectation is given by
    \begin{align}
        \mathbb{E}[Z] &= \frac{1}{2} \left( \mathbb{E}[X] + \mathbb{E}[Y] \right) \\
        &= \frac{1}{2} 2 \, \mathbb{E}[X] \\
        &= \mu
    \end{align}
    and the variance is given by
    \begin{align}
        \mathrm{Var}\left[Z\right] &= \mathrm{Var}\left[ \frac{1}{2} \left( X + Y \right) \right] \\
        &= \frac{1}{2^{2}} \left( \mathrm{Var}\left[X\right] + \mathrm{Var}\left[Y\right] \right) \\
        &= \frac{1}{2^{2}} 2 \mathrm{Var}\left[X\right] \\
        &= \frac{1}{2} \sigma^{2} ~ .
    \end{align}
    Thus, if the similarity of neighboring pixels is sufficiently high,
    downsampling reduces the variance of average pixel $ Z $ by a factor of $ 2 $. \qed
\end{proof}

Naturally, two neighboring pixels are not exactly equal.
However, downsampling can also be viewed as superposing two signals, each with a highly correlated and an uncorrelated part.
Without providing proof, the amplitude of the addition of two signals can be viewed as vector addition.
In the uncorrelated case, the two signals are perpendicular to each other and in the correlated case, the angle between the two signals is acute.
Thus, the correlated parts of the two signals have a higher impact on the resulting addition than the uncorrelated (noise) parts.
In the ideal case, where the noise is uncorrelated and the signals are in parallel, the same noise reduction as above follows.

\subsection{Link Between Poisson Distribution and Normal Distribution}
\label{app:poisson}

We approximate the Poisson noise to simulate a low-dose X-ray image with a Normal distribution.
It is well-known that the limiting distribution of $ \mathsf{Poisson}(\lambda) $ is Normal as $ \lambda \rightarrow \infty $ \cite{Hogg2018}.
For completeness, we list a common proof using the moment generating function of a standardized Poisson random variable:

\begin{theorem}
    The \textsf{Poisson}($\lambda$) distribution can be approximated with a Normal distribution as $ \lambda \rightarrow \infty $.
\end{theorem}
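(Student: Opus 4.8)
The plan is to argue via moment generating functions (MGFs): show that the MGF of a \emph{standardized} Poisson variable converges pointwise, on a neighborhood of the origin, to the MGF of a standard normal, and then invoke the continuity theorem to upgrade this to convergence in distribution.

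First I would record the two ingredients. If $ X \sim \mathsf{Poisson}(\lambda) $, then $ \mathbb{E}[X] = \mathrm{Var}[X] = \lambda $, and summing the Poisson series gives the MGF $ M_X(t) = \mathbb{E}[e^{tX}] = \exp\!\big(\lambda(e^{t} - 1)\big) $ for all $ t \in \mathbb{R} $. Next I would introduce the standardized variable $ Z_{\lambda} = (X - \lambda)/\sqrt{\lambda} $ and compute $ M_{Z_{\lambda}}(t) = e^{-t\sqrt{\lambda}}\, M_X\!\big(t/\sqrt{\lambda}\big) = \exp\!\big( \lambda(e^{t/\sqrt{\lambda}} - 1) - t\sqrt{\lambda} \big) $.

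The key step is the asymptotic expansion of $ \log M_{Z_{\lambda}}(t) = \lambda\big(e^{t/\sqrt{\lambda}} - 1\big) - t\sqrt{\lambda} $. Substituting the Taylor series $ e^{t/\sqrt{\lambda}} = 1 + t/\sqrt{\lambda} + t^{2}/(2\lambda) + t^{3}/(6\lambda^{3/2}) + \cdots $, the leading $ t\sqrt{\lambda} $ contributions cancel and one is left with $ \log M_{Z_{\lambda}}(t) = \tfrac{t^{2}}{2} + O(\lambda^{-1/2}) $, with the error bounded uniformly for $ t $ in any fixed compact set. Letting $ \lambda \rightarrow \infty $ yields $ M_{Z_{\lambda}}(t) \rightarrow e^{t^{2}/2} $, which is exactly the MGF of $ \mathcal{N}(0,1) $. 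I would then cite the continuity theorem for MGFs (pointwise convergence of MGFs on an open interval around $ 0 $ implies convergence in distribution) to conclude $ Z_{\lambda} \xrightarrow{d} \mathcal{N}(0,1) $, equivalently that $ \mathsf{Poisson}(\lambda) $ is approximated by $ \mathcal{N}(\lambda, \lambda) $ for large $ \lambda $, which is the claim.

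The main obstacle — and it is a modest one — is making the remainder in the Taylor expansion rigorous: one must bound the tail $ \sum_{k \geq 3} (t/\sqrt{\lambda})^{k}/k! $ after multiplication by $ \lambda $ and verify it tends to $ 0 $ uniformly for $ t $ near the origin, and then apply the precise statement of the MGF continuity theorem (including the hypothesis that the limiting MGF is finite in a neighborhood of $ 0 $, which holds here). Everything else is routine algebra.
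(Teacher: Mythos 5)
Your argument is the same as the paper's: compute the MGF of the standardized Poisson variable $Z_{\lambda}=(X_{\lambda}-\lambda)/\sqrt{\lambda}$, Taylor-expand $e^{t/\sqrt{\lambda}}$ so the $t\sqrt{\lambda}$ terms cancel, and let $\lambda\rightarrow\infty$ to obtain $e^{t^{2}/2}$, the standard normal MGF. Your explicit mention of uniform control of the Taylor remainder and the MGF continuity theorem only adds rigor to the same route; there is no substantive difference.
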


\begin{proof}
    Let $ X_{\lambda} \sim \mathsf{Poisson}(\lambda), ~ \lambda \in \{ 1, 2, \ldots \} $.
    The probability mass function of $ X_{\lambda} $ is given by
    \begin{equation}
        f_{X_{\lambda}}(x) = \frac{\lambda^{x}e^{-\lambda}}{x!} \quad x \in \{ 0, 1, 2, \ldots \} ~ .
    \end{equation}
    The moment generating function is given by \cite{Hogg2018}
    \begin{equation}
        M_{X_{\lambda}}(t) = \mathbb{E} [ e^{t X_{\lambda}} ] = e^{\lambda (e^{t}-1)} ~ .
    \end{equation}
    The standardized Poisson random variable
    \begin{equation}
        Z = \frac{X_{\lambda} - \lambda}{\sqrt{\lambda}}
    \end{equation}
    has the limiting moment generating function
    \begin{align}
       \lim_{\lambda \rightarrow \infty} M_{Z} (t) &= \lim_{\lambda \rightarrow \infty} \mathbb{E} \left[ \exp{\left( t \cdot \frac{X_{\lambda} - \lambda}{\sqrt{\lambda}} \right)} \right] \\
        &= \lim_{\lambda \rightarrow \infty} \exp{ \left( -t \sqrt{\lambda}  \right) } \mathbb{E} \left[ \exp{\left( \frac{t X_{\lambda}}{\sqrt{\lambda}} \right)} \right] \\
        &= \lim_{\lambda \rightarrow \infty} \exp{ \left( -t \sqrt{\lambda}  \right) } \exp{\left( \lambda \left( e^{t/\sqrt{\lambda}} - 1 \right) \right)} \\
        &= \lim_{\lambda \rightarrow \infty} \exp{ \left( -t \sqrt{\lambda}  + \lambda \left( t \lambda^{-1/2} + t^{2} \lambda^{-1}/2 + t^{3} \lambda^{-3/2}/6 + \ldots \right) \right) } \\
        &= \lim_{\lambda \rightarrow \infty} \exp{ \left( t^{2} / 2 + t^{3}\lambda ^{-1/2}/6 + \ldots \right) } \\
        &= \exp{\left( t^{2} / 2 \right)}
    \end{align}
    which is the moment generating function of a standard normal random variable.
    \qed
\end{proof}

\end{document}